\definecolor{lime}{HTML}{A6CE39}
\DeclareRobustCommand{\orcidicon}{
	\begin{tikzpicture}
	\draw[lime, fill=lime] (0,0) 
	circle [radius=0.16] 
	node[white] {{\fontfamily{qag}\selectfont \tiny ID}};
	\draw[white, fill=white] (-0.0625,0.095) 
	circle [radius=0.007];
	\end{tikzpicture}
	\hspace{-2mm}
}
\newtheorem{theorem}{Theorem}[section]
\newtheorem{lemma}[theorem]{Lemma}
\newtheorem{proposition}[theorem]{Proposition}
\theoremstyle{definition}
\newtheorem{definition}[theorem]{Definition}
\newtheorem{example}[theorem]{Example}
\theoremstyle{remark}
\DeclareMathOperator{\perm}{perm}
\DeclareMathOperator{\tr}{Tr}
\numberwithin{equation}{section}
\begin{document}

\title[Cycle Index Polynomials and Quantum Separability Tests]{Cycle Index Polynomials and Generalized Quantum Separability Tests}
\author{$^1$Zachary P. Bradshaw\orcidA{}}
\address{$^1$Tulane University, Department of Mathematics}
\email{zbradshaw@tulane.edu}
\author{$^2$Margarite L. LaBorde\orcidB{}}
\address{$^2$Louisiana State University, Hearne Institute for Theoretical Physics and Department of Physics \& Astronomy}
\email{mlabo15@lsu.edu}
\author{$^{2,3}$Mark M. Wilde\orcidC{}}
\address{$^3$Cornell University, School of Electrical and Computer Engineering}
\email{wilde@cornell.edu}
\maketitle
\begin{abstract}
	The mixedness of one share of a pure bipartite state determines whether the overall state is a separable, unentangled state. Here we consider quantum computational tests of mixedness, and we derive an exact expression of the acceptance probability of such tests as the number of copies of the state becomes larger. We prove that the analytical form of this expression is given by the cycle index polynomial of the symmetric group $S_k$, which is itself related to the Bell polynomials. After doing so, we derive a family of quantum separability tests, each of which is generated by a finite group; for all such algorithms, we show that the acceptance probability is determined by the cycle index polynomial of the group. Finally, we produce and analyze explicit circuit constructions for these tests, showing that the tests corresponding to the symmetric and cyclic groups can be executed with $O(k^2)$ and $O(k\log(k))$ controlled-SWAP gates, respectively, where $k$ is the number of copies of the state being tested.
 \end{abstract}

\section{Introduction}\label{introduction}

Entanglement is an inherently quantum mechanical phenomenon and thus is of great interest in quantum information science. Separability tests acting on pure states equivalently test for entanglement, thus creating an avenue to identify potentially resourceful states and verify entanglement generation protocols. Much work has been done to understand entanglement, including the positive partial transpose (PPT) criterion \cite{Per96,HHH96} and the related entanglement measure known as logarithmic negativity \cite{ZHSL98,EP99,Vidal2002}. 
Another well known approach considers $k$-extendibility  of a state \cite{W89a,DPS02,navascues2009}, with related quantifiers proposed in \cite{KDWW19,WWW19,KDWW21}. Experimental entanglement tests have also been implemented on trapped ion \cite{brydges2019} and optical systems \cite{shalm05,cassemiro2010}.  These notions exemplify how entanglement is characterized and quantified in the current literature.

The most common quantum computational test of separability of pure states is the swap test, introduced in \cite{barenco1997stabilization}, applied in \cite{brennen2003entanglement} for entanglement detection, and used in quantum fingerprinting \cite{buhrman2001quantum}. To understand it, first recall that a pure bipartite state $\ket{\psi}_{AB}$ is separable if it can be written as a tensor product of two states, as
\begin{align}
    \ket{\psi}_{AB} =\ket{\phi}_A\otimes\ket{\varphi}_B.
    \label{eq:sep-pure-state}
\end{align}
Now, if we take two copies of this separable state, it has the following form:
\begin{align}
    \ket{\psi}_{A_1 B_1} \otimes \ket{\psi}_{A_2 B_2} =\ket{\phi}_{A_1} \otimes \ket{\varphi}_{B_1} \otimes \ket{\phi}_{A_2} \otimes \ket{\varphi}_{B_2}.
\end{align}
This state is invariant under a swap of systems $A_1$ and $A_2$, as well as a swap of systems $B_1$ and $B_2$. Thus, the swap test accepts with certainty in this case; however, if a pure bipartite state is not separable, the two-copy state does not possess the above swap invariance, and the swap test can detect this lack of invariance by means of the  phase kickback trick, well known in quantum computation. Here we make use of the swap test, as well as generalizations of it, to study the symmetry of quantum states with respect to an arbitrary finite group \cite{burchardt2021,margo}.

The recent work in \cite{margo} proposed a generalization of the swap test as a method for detecting entanglement, based on the observation that multiple copies of the separable state in \eqref{eq:sep-pure-state} are invariant under arbitrary permutations of both the $A$ systems and $B$ systems. Indeed, by writing such a state down explicitly as
\begin{equation}
    \bigotimes_{i=1}^{k} \ket{\psi}_{A_i B_i} = \bigotimes_{i=1}^{k} \ket{\phi}_{A_i} \otimes \ket{\varphi}_{B_i},
\end{equation}
it is clear that such a state is invariant as mentioned above. However, if the state $\ket{\psi}_{A B}$ is not separable, then checking for various kinds of permutation invariance of the state $\bigotimes_{i=1}^{k} \ket{\psi}_{A_i B_i}$ leads to more fine-grained tests of entanglement with alternative mathematical expressions for the acceptance probability of the test. Other works, such as \cite{islam2015measuring,nguyen2021experimental} express interest in systems of a similar nature. Understanding these expressions in more detail is one of the main goals of the present paper. Our intention is to generalize the family of separability tests in \cite{margo} further by establishing a one-to-one correspondence between finite groups and a subset of quantum separability tests and to compare the cost of implementing these new tests with that of the swap and symmetric groups tests.

Before we discuss our results, let us indicate how our approach falls within the framework of $G$-Bose symmetry tests, as developed in \cite{margo}. We use this method to discuss the separability of a pure bipartite state \cite{DPS02}, although group symmetry tests have been used to great effect to test other properties as well (see, e.g. \cite{kada2008efficiency}). These symmetry tests involve a group~$G$ and determine if a given state is invariant under the action of a unitary representation of that group. Bose symmetry tests specifically describe situations where every element of a group is considered simultaneously rather than consecutively. The quantum algorithm is implemented by physically realizing the projector onto the symmetric subspace, and the symmetry of the state is manifested through the acceptance probability of the algorithm (see \cite{harrow2013church} for a thorough discussion of the symmetric subspace in regards to quantum information). In the $G$-Bose symmetry framework, a pure state $\ket{\psi}_{AB}$ is given and an $S_k$-Bose symmetry test is conducted on the tensor-power state $\ket{\psi}_{AB}^{\otimes k}$, where $S_k$ denotes the symmetric group on $k$ letters. Possessing this tensor-power state is equivalent to having access to $k$ copies of our state being tested. The swap test is recovered as a special case in which $k=2$.

The $G$-Bose symmetry tests allow for a generalization of the swap test to more copies of a state of interest and higher-order groups. These algorithms exchange simplicity for certainty, analogous to how fingerprinting is both more accurate and complicated when greater numbers of prints are taken. In choosing to investigate group symmetries, rather than merely the swap test, the separability of a state can be determined more quickly and accurately. In what follows, we will explicitly consider the trade-off between the swap test, which may pass some states with high probability even if entangled, and higher-order permutation tests, which will reject entangled states with higher probability than a swap test in exchange for the use of more resources.

The natural question is, when do these more complex tests merit performing? In this work, we derive the acceptance probability of a generalized separability test. In doing so, we present an inherent reliance on the cycle index polynomial, a particularly important polynomial in Pólya theory \cite{polya,combinatorics} that encodes the structure of a permutation group by storing the number of elements of a given cycle type as its coefficients. This allows us to compare separability tests generated from various groups, as well as investigate the mathematical relationships inherently present in these tests. We directly show that an arbitrary finite group generates a separability test with its acceptance probability given by the cycle index polynomial of that group. We supplement this by then giving explicit quantum circuit descriptions for groups of interest and counting the number of gates needed to realize each test. Combining our acceptance probability results with resource counting gives us a metric to compare when the relative strictness of the test is outweighed by the benefit of fewer gate resources, and we discuss this factor in more detail in Section~\ref{sec:comparison}.

The rest of our paper is organized as follows. In Section~\ref{formula}, we review the algorithm for the  bipartite pure-state separability test in \cite{margo}, and we prove  that the acceptance probability of this algorithm is given by the cycle index polynomial \cite{polya,combinatorics} of the symmetric group $S_k$, which is itself related to the complete Bell polynomials \cite{Roman}. This enables us to write the acceptance probability as both the permanent and the determinant of particular matrices, as a consequence of Newton's identities \cite{Newton}. In~\cite{margo}, it was conjectured that this acceptance probability does not increase as $k\to\infty$. In Section~\ref{sec:conjecture}, we prove that this conjecture is true. In fact, we show that it strictly decreases and converges to zero whenever $\rho_B \coloneqq \tr_A[|\psi\rangle\!\langle\psi |_{AB}]$ is not a pure state.

In Section~\ref{generalization}, we generalize the  bipartite pure-state separability test to an algorithm involving any group $G$, in which a $G$-Bose symmetry test is performed on the tensor-power state $\ket{\psi}_{AB}^{\otimes k}$. By identifying $G$ with a subgroup of $S_k$, which is guaranteed to exist by Cayley's theorem, we  show, by the same reasoning as in Section~\ref{formula}, that the acceptance probability of the algorithm is given by the cycle index polynomial of the group $G$. We discuss how these generalized tests are in fact separability tests for pure, bipartite states, and they have an interesting connection to combinatorics via the cycle index polynomial. Finally, in Section~\ref{sec:comparison}, we analyze the resources needed to implement these tests on quantum computers; in doing so, we show that simpler groups can give comparable performance for fewer resources. Finally, we conclude in Section~\ref{sec:conclusion} with a summary, followed by open questions for future work.

\section{Bipartite Pure-State Separability Test}\label{formula}

Let us begin by reviewing the construction of the bipartite pure-state separability test in \cite{margo}. As discussed therein, it can also be viewed as a $G$-Bose symmetry test. We now recall the definition of a $G$-Bose symmetric state.
\begin{definition}
Let $G$ be a finite group with a unitary representation $U_S:G\to U(\mathcal{H})$, where $U(\mathcal{H})$ denotes the set of all unitaries that act on a Hilbert space $\mathcal{H}$. Then a state $\rho_S$ is called $G$-Bose symmetric if
\begin{align}
    \label{PiG}
    \Pi^G_S\rho_S\Pi^G_S=\rho_S,
\end{align}
where $\Pi^G_S\coloneqq\frac{1}{\lvert G\rvert}\sum_{g\in G}U_S(g)$ is the group representation projection and $\lvert G\rvert$ is the order of the group (the number of elements in the underlying set).
\end{definition}

In Dirac's notation, a pure state $\ket{\psi}_S$ is $G$-Bose symmetric if $\Pi^G_S\ket{\psi}_S=\ket{\psi}_S$. This property is equivalent to $\left\Vert\Pi^G_S\ket{\psi_S}\right\Vert_2=1$, where $\|\cdot\|_2$ denotes the standard Euclidean 2-norm. For a general mixed state, the condition $\tr[\Pi^G_S \rho_S]=1$ is equivalent to the definition in \eqref{PiG}, as argued in \cite{margo}. Thus, the algorithm from \cite{margo} tests for $G$-Bose symmetry by checking the latter condition (see also \cite[Chapter~8]{harrow2005applications}). Indeed, we append an ancillary register $C$ in the state $\ket{0}_C$ to the state $\ket{\psi}_S$ (where the symbol `0' is identified with the identity element of the group~$G$) and act on $C$ with the quantum Fourier transform, leaving a composite system in the state
\begin{align}
    \ket{+}_C\otimes\ket{\psi}_S , 
    \label{eq:init-state}
\end{align}
where $\ket{+}_C$ is defined to be the uniform superposition over all labels of the group elements:
\begin{equation}
\ket{+}_C \coloneqq  \frac{1}{\sqrt{|G|}} \sum_{g\in G}\ket{g}_C.
\label{eq:unif-sup-state}
\end{equation}
Next we act on the state in \eqref{eq:init-state} with the controlled unitary  $\sum_{g\in G}\ket{g}\!\!\bra{g}_C\otimes U_S(g)$, producing the state
\begin{align}
    \frac{1}{\sqrt{|G|}}\sum_{g\in G}\ket{g}_C\otimes(U_S(g)\ket{\psi}_S).
\end{align}
We now project register $C$ onto the state $\ket{+}_C$, which can be accomplished probabilistically by applying the inverse quantum Fourier transform to $C$, measuring $C$ in the $\{\ket{g}\!\!\bra{g}_C\}_{g\in G}$ basis, and declaring `accept' if $\ket{0}\!\!\bra{0}_C$ occurs and  `reject' otherwise. Then the acceptance probability $p$ is given by
\begin{align}
    p&=\bigg\lVert(\bra{+}_C\otimes I_S)\bigg(\frac{1}{\sqrt{|G|}}\sum_{g\in G}\ket{g}_C\otimes(U_S(g)\ket{\psi}_S)\bigg)\bigg\rVert_2^2\\
    &=\bigg\lVert\frac{1}{|G|}\sum_{g\in G}U_S(g)\ket{\psi}_S\bigg\rVert_2^2\\
    &=\left \|\Pi_S^G\ket{\psi}_S\right \|_2^2\\
    &=\tr[\Pi_S^G\ket{\psi}\!\!\bra{\psi}_S].
\end{align}
By convexity, this result is easily generalized to the case in which the state of system~$S$ is described by a density operator $\rho_S$, so that the acceptance probability is equal to $\tr[\Pi_S^G\rho_S]$. 

Now, to test for the separability of a bipartite pure state $\psi_{AB}$, we suppose that $k$ copies of the state $\psi_{AB}$ are available, which we write as $\psi_{AB}^{\otimes k}$. We also identify the $A$ systems by $A_1 \cdots A_k$ and the $B$ systems by $B_1 \cdots B_k$. We then perform an $S_k$-Bose symmetry test on the state $\psi_{AB}^{\otimes k}$ by identifying $S$ with $A_1B_1\cdots A_kB_k$ and $U_S(\pi)$ with $I_{A_1\cdots A_k}\otimes W_{B_1\cdots B_k}(\pi)$, where $\pi\in S_k$ and $W_{B_1\cdots B_k}:S_k\to U(\mathcal{H}_{B_1\cdots B_k})$ is the standard unitary representation of $S_k$ that acts on $\mathcal{H}_{B_1\cdots B_k} \equiv \mathcal{H}_{B_1} \otimes \cdots \otimes \mathcal{H}_{B_k}$ by permuting the Hilbert spaces according to the corresponding permutation. Define $\rho_B\coloneqq \tr_A[\psi_{AB}]$. By applying \eqref{PiG}, the acceptance probability for the bipartite pure-state separability algorithm is given by
\begin{align}
p^{(k)} \coloneqq \tr[\Pi_{B_1\cdots B_k}\rho_B^{\otimes k}]
\label{eq:def-acc-prob}
\end{align}
where
\begin{align}
\Pi_{B_1\cdots B_k}\coloneqq \frac{1}{k!}\sum_{\pi\in S_k}W_{B_1\cdots B_k}(\pi).
\end{align}

\begin{figure}[ptb]
\begin{center}
\includegraphics[
width=4in
]{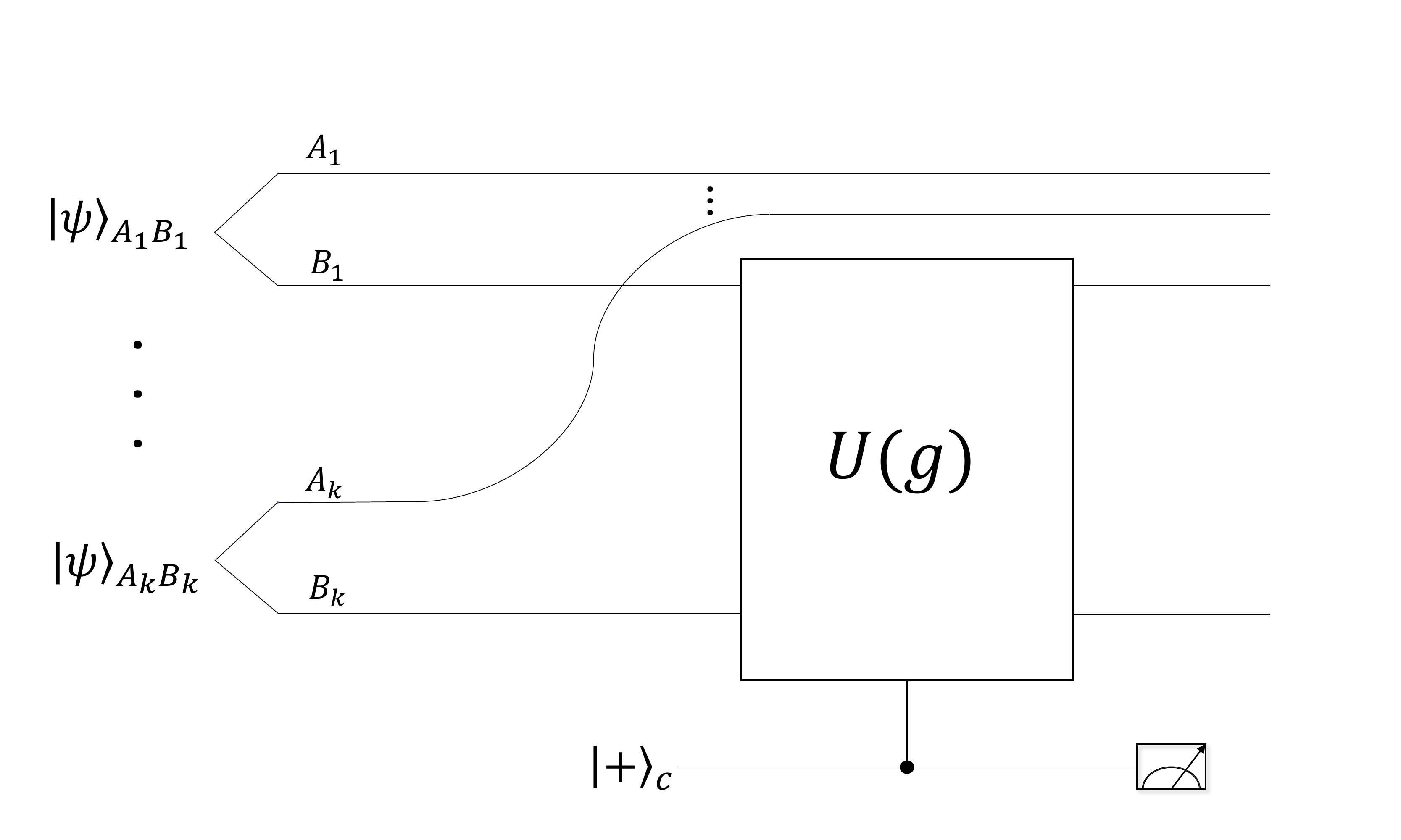}
\end{center}
\caption{Quantum circuit to implement a $G$-Bose symmetry test. We take $k$ copies of an initial bipartite state $\ket{\psi}_{AB}$ and consider the reduced state $\rho_B = \operatorname{Tr}_A [|\psi\rangle\!\langle\psi |_{AB}]$. The collection of these reduced states are then subjected to the separability test determined by the group, where $\ket{+}_C $ is defined in \eqref{eq:unif-sup-state} and $U(g)$ is an element of the group representation.}
\label{fig:circuit}%
\end{figure}

Figure~\ref{fig:circuit} reviews the $G$-Bose symmetry test. The circuit begins with $k$ copies of an initial bipartite state $\ket{\psi}_{AB}$. The $B_i$ subsystems are collected and subject to a controlled unitary gate whose mathematical description involves each unitary $U(g)$. The control register is initialized to the state $\ket{+}_C$, as defined in \eqref{eq:unif-sup-state}. Under the separability test circuit, $G = S_k$, and the unitary representation is a permutation of the $B_i$ subsystems.

Our first main result is a formula for the acceptance probability $p^{(k)}$ in \eqref{eq:def-acc-prob} as a sum over the partitions of $k$ of a product of traces of $\rho_B$ and its powers and certain scaling factors. In particular, the formula is identical to that of the cycle index polynomial of the symmetric group $S_k$, with each variable $x_j$ taking the value $\tr[\rho^j]$ (see the discussion after the following proposition for more on this). In what follows, we use the standard cycle notation wherein $(i_1\ i_2\ \cdots\ i_k)$ refers to the permutation $\sigma:\{1,\ldots,k\}\to\{1,\ldots,k\}$ that sends $i_1$ to $i_2$, $i_2$ to $i_3$, and so on with $i_k$ mapping to $i_1$. For example, $(1\ 2\ 3)$ refers to the permutation $\sigma$ defined by $\sigma(1)=2$, $\sigma(2)=3$, and $\sigma(3)=1$. We follow the standard convention for functions so that a product of cycles (a composition of functions) is read from right to left. For example, $(1\ 2\ 3)(2\ 3)=(1\ 2)(3)=(1\ 2)$.

\begin{theorem}
\label{formulathm}
Let $\psi_{AB}$ denote a pure bipartite state and define $\rho_B\coloneqq \tr_A[\psi_{AB}]$. Then the acceptance probability $p^{(k)}$ for the bipartite pure-state separability test is given by
\begin{align}
p^{(k)}=\sum_{a_1+2a_2+\cdots +ka_k=k}\prod_{j=1}^k\frac{(\tr[\rho_B^j])^{a_j}}{j^{a_j}a_j!},
\label{eq:main-thm-cycle-index-woohoo}
\end{align}
where the sum is taken over the partitions of $k$.
\end{theorem}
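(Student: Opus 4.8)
The plan is to expand the acceptance probability directly from its definition and reorganize the resulting sum over permutations according to cycle type. Starting from $p^{(k)}=\tr[\Pi_{B_1\cdots B_k}\rho_B^{\otimes k}]$ and inserting the definition of the projector, I would write
\[
p^{(k)}=\frac{1}{k!}\sum_{\pi\in S_k}\tr[W_{B_1\cdots B_k}(\pi)\,\rho_B^{\otimes k}].
\]
The entire argument then reduces to two independent ingredients: evaluating a single summand $\tr[W(\pi)\rho_B^{\otimes k}]$, and counting how many permutations $\pi$ produce each value.

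The key lemma I would establish is a \emph{generalized swap trick}: if $\pi$ has cycle type $(a_1,\ldots,a_k)$, meaning that it decomposes into $a_j$ disjoint cycles of length $j$ for each $j$, then
\[
\tr[W(\pi)\,\rho_B^{\otimes k}]=\prod_{j=1}^k\bigl(\tr[\rho_B^j]\bigr)^{a_j}.
\]
To prove this, I would first treat a single $j$-cycle, say $(1\ 2\ \cdots\ j)$. Writing the action of $W(\pi)$ on product basis states and expanding the trace in matrix elements of $\rho_B$, the cyclic structure of the permutation forces the matrix indices into a single closed chain $i_1\to i_2\to\cdots\to i_j\to i_1$; summing over the internal indices collapses this chain to $\sum_{i_1}(\rho_B^j)_{i_1 i_1}=\tr[\rho_B^j]$. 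Since disjoint cycles act on disjoint tensor factors, the operator $W(\pi)\rho_B^{\otimes k}$ factorizes as a tensor product over the blocks corresponding to the individual cycles, and the trace of a tensor product is the product of the block traces; this yields the stated product over cycle lengths.

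With the lemma in hand, I would invoke the standard combinatorial fact that the number of permutations in $S_k$ with cycle type $(a_1,\ldots,a_k)$, where $a_1+2a_2+\cdots+ka_k=k$, equals $k!/\prod_{j=1}^k\bigl(j^{a_j}a_j!\bigr)$. Grouping the sum over $\pi$ by cycle type and substituting both the lemma and this count gives
\[
p^{(k)}=\frac{1}{k!}\sum_{a_1+2a_2+\cdots+ka_k=k}\frac{k!}{\prod_{j=1}^k j^{a_j}a_j!}\prod_{j=1}^k\bigl(\tr[\rho_B^j]\bigr)^{a_j},
\]
and the factors of $k!$ cancel, producing exactly \eqref{eq:main-thm-cycle-index-woohoo}. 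The main obstacle is the lemma: care is required both in fixing the convention for how $W(\pi)$ permutes the tensor factors, so that the index chain closes up correctly and produces $\tr[\rho_B^j]$ rather than some other contraction, and in justifying the factorization across disjoint cycles, since it is precisely this step that upgrades the single-cycle computation to the full product over the cycle type.
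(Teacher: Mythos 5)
Your proposal is correct and follows essentially the same route as the paper's own proof: evaluate $\tr[W(\pi)\rho_B^{\otimes k}]$ for a single cycle by an explicit index-chain contraction, factorize over disjoint cycles acting on disjoint tensor factors, and then group the sum over $S_k$ by cycle type using the standard count $k!/\prod_j j^{a_j}a_j!$. No gaps; the two ingredients you isolate (the generalized swap trick and the cycle-type count) are exactly the two steps the paper carries out.
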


\begin{proof}
Let $\pi\coloneqq (1\ 2\ \cdots\ k)$ and consider the representation $W_{B_1\cdots B_k}(\pi)$. It was shown in \cite{Ekert} that $\tr[W_{B_1\cdots B_k}(\pi)\rho_B^{\otimes k}]=\tr[\rho_B^k]$, but we include a proof here for completeness. Expanding $\rho$ in the standard basis as $\rho = \sum_{i,j} p_{i,j} |i\rangle\!\langle j|$, we have
\begin{align}
& \tr[W_{B_1\cdots B_k}(\pi)\rho_B^{\otimes k}] \notag \\
&= \tr\bigg[W_{B_1\cdots B_k}(\pi) \sum_{\substack{i_1,\ldots,i_k \\ j_1,\ldots,j_k}}p_{i_1j_1}\cdots p_{i_kj_k}|i_1\rangle\!\langle j_1|\otimes|i_2\rangle\!\langle j_2|\otimes\cdots\otimes|i_{k}\rangle\!\langle j_k|\bigg]\\
&= \tr\bigg[\sum_{\substack{i_1,\ldots,i_k \\ j_1,\ldots,j_k}}p_{i_1j_1}\cdots p_{i_kj_k}|i_k\rangle\!\langle j_1|\otimes|i_1\rangle\!\langle j_2|\otimes\cdots\otimes|i_{k-1}\rangle\!\langle j_k|\bigg]\\
&=\sum_{\substack{i_1,\ldots,i_k \\ j_1,\ldots,j_k \\ t_1,\ldots,t_k}}p_{i_1j_1}\cdots p_{i_kj_k}\delta_{t_1i_k}\delta_{j_1t_1}\cdots\delta_{t_ki_{k-1}}\delta_{j_kt_k}\\
&=\sum_{t_1,\ldots,t_k}p_{t_2t_1}p_{t_3t_2}\cdots p_{t_kt_{k-1}}p_{t_1t_k}.
\end{align}
Meanwhile, 
\begin{align} \tr[\rho^k]&=\tr\bigg[\sum_{i_1,\ldots,i_k,j_1,\ldots,j_k}p_{i_1j_1}\cdots p_{i_kj_k}|i_1\rangle\!\langle j_1|i_2\rangle\!\langle j_2|\cdots|i_k\rangle\!\langle j_k|\bigg]\\
&=\sum_{i_1,i_2,\ldots,i_k}p_{i_1i_2}p_{i_2i_3}\cdots p_{i_{k-1}i_k}p_{i_ki_1}.
\end{align}
Thus, by relabeling the indices, we see that
\begin{align} \tr[W_{B_1\cdots B_k}(\pi)\rho_B^{\otimes k}]=\tr[\rho^k].
\end{align}
Similarly, we can show for every $m$-cycle $\pi_m\in S_k$ that
\begin{align}
\tr[W_{B_1\cdots B_k}(\pi_m)\rho_B^{\otimes k}]=\tr[\rho^m].
\end{align}
Now suppose $\pi_m$ and $\pi_n$ are disjoint $m$ and $n$-cycles, respectively. Then they act on different Hilbert spaces and so the trace of the product of their representations acting on $\rho_B^{\otimes k}$ splits into the product of traces. That is,
\begin{align} \tr[W_{B_1\cdots B_k}(\pi_m)W_{B_1\cdots B_k}(\pi_n)\rho^{\otimes k}]=\tr[\rho^m]\tr[\rho^n].
\end{align}
Now, since every $m$-cycle yields a factor of $\tr[\rho^m]$ and products of disjoint cycles split the trace, we have
\begin{align} p^{(k)}&=\tr[\Pi_{B_1\cdots B_k}\rho_B^{\otimes k}]\\
&=\tr\bigg[\frac{1}{k!}\sum_{\pi\in S_k} W_{B_1\cdots B_k}(\pi)\rho_B^{\otimes k}\bigg]\\
&=\frac{1}{k!}\sum_{\pi\in S_k}\tr[W_{B_1\cdots B_k}(\pi)\rho_B^{\otimes k}]\\
&=\frac{1}{k!}\sum_{a_1+2a_2+\cdots+ka_k=k}c(a_1,\ldots,a_k)\prod_{j=1}^k\tr[\rho_B^j]^{a_j}
\end{align}
where $c(a_1,\ldots,a_k)$ is the number of cycles in $S_k$ with cycle type $(a_1,\ldots,a_k)$, which is known to be $\frac{k!}{\prod_{j=1}^kj^{a_j}a_j!}$ (see \cite[Eq.~(13.3)]{van2001}). Thus, the equality in \eqref{eq:main-thm-cycle-index-woohoo} follows.
\end{proof}

The cycle index polynomial of a permutation group $G$ is defined by
\begin{align} \label{cycleindex}
    Z(G)(x_1,\ldots,x_n)\coloneqq\frac{1}{|G|}\sum_{g\in G}x_1^{c_1(g)}\cdots x_n^{c_n(g)},
\end{align}
where $c_j(g)$ denotes the number of cycles of length $j$ in the disjoint cycle decomposition of $g$.
Setting $x_j=\tr[\rho_B^j]$, we see that the acceptance probability of the separability test is given by the cycle index polynomial of the symmetric group $S_k$ (see \cite[Chapter 37, pg.~526]{van2001}), so that it satisfies the recurrence relation
\begin{align} \label{recurrence}
    p^{(k)}&=\frac{1}{k}\sum_{j=1}^k\tr[\rho^j]p^{(k-j)}.
\end{align} Furthermore, the cycle index polynomial of the symmetric group $S_k$ is equivalent to
\begin{align}\label{bellpolyAP} \frac{1}{k!}B_k(x_1,x_2,2!\, x_3,3!\, x_4,\ldots,(k-1)!\, x_k),
\end{align}
where $B_k(x_1,\ldots,x_k)$ is the complete Bell polynomial \cite{comtet1974}. From this perspective, the acceptance probability can be interpreted as the $k^{\text{th}}$ raw moment of a probability distribution with the first $k$ cumulants given by $1$, $\tr[\rho^2]$, \ldots, $\tr[\rho^k]$.

Using a variant of Newton's identities \cite{Newton}, we can also write the acceptance probability as both a determinant and a permanent of sequences of matrices. To see this, define two sequences of matrices by
\begin{align}
P_k\coloneqq\begin{pmatrix}
1&1&0&\cdots\\
\tr[\rho^2]&1&2&0&\cdots\\
\vdots&\vdots&\ddots&\ddots\\
\tr[\rho^{k-1}]&\tr[\rho^{k-2}]&\cdots&1&k-1\\
\tr[\rho^k]&\tr[\rho^{k-1}]&\cdots&\tr[\rho^2]&1
\end{pmatrix}
\end{align}
and
\begin{align}
D_k\coloneqq\begin{pmatrix}
1&-1&0&\cdots\\
\tr[\rho^2]&1&-2&0&\cdots\\
\vdots&\vdots&\ddots&\ddots\\
\tr[\rho^{k-1}]&\tr[\rho^{k-2}]&\cdots&1&1-k\\
\tr[\rho^k]&\tr[\rho^{k-1}]&\cdots&\tr[\rho^2]&1
\end{pmatrix}.
\end{align}
Let $\{\lambda_i\}_i$ be the eigenvalues of $\rho_B$. Then $\{\lambda_i^m\}_i$ are the eigenvalues of $\rho_B^m$. So the trace of $\rho_B^m$ is given by
\begin{align} \tr(\rho_B^m)=\sum_i\lambda_i^m.
\end{align}
That is, $\tr(\rho_B^m)$ is the $m$-th power sum of the eigenvalues, so that the acceptance probability $p^{(k)}$ is given by the $k$-th complete homogeneous symmetric polynomial. A variant of Newton's identities \cite{Newton} relates the $k$-th complete homogeneous symmetric polynomial $h_k$ with the $k$-th power sum $p_k$ by $kh_k=\sum_{j=1}^kh_{k-j}p_j$. Now applying Cramer's rule, we can write
\begin{align} \label{det}
h_k&=\frac{1}{k!}\det\begin{pmatrix}
p_1&-1&0&\cdots\\
p_2&p_1&-2&0&\cdots\\
\vdots&\vdots&\ddots&\ddots\\
p_{k-1}&p_{k-2}&\cdots&p_1&1-k\\
p_k&p_{k-1}&\cdots&p_2&p_1
\end{pmatrix}\\
&\label{perm}=\frac{1}{k!}\perm\begin{pmatrix}
p_1&1&0&\cdots\\
p_2&p_1&2&0&\cdots\\
\vdots&\vdots&\ddots&\ddots\\
p_{k-1}&p_{k-2}&\cdots&p_1&k-1\\
p_k&p_{k-1}&\cdots&p_2&p_1
\end{pmatrix}.
\end{align}
Thus, by identifying the acceptance probability with $h_k$ and the trace of $\rho^j$ with $p_j$, we have
\begin{align} p^{(k)}=\frac{1}{k!}\perm(P_k)=\frac{1}{k!}\det(D_k).
\end{align}
See \cite[Chapter~1, Section~2]{macdonald1995} for more information on Eqs.~\eqref{recurrence}, \eqref{det}, and \eqref{perm}.

\section{Strictly Decreasing Acceptance Probability}

\label{sec:conjecture}

In \cite{margo}, it was conjectured that the acceptance probability of the bipartite pure-state separability test is monotone non-increasing in $k$. We answer this conjecture in the affirmative as a corollary of the following lemma about complete Bell polynomials. In fact, we show that this inequality is strict and the acceptance probability approaches zero in the limit $k\to\infty$ whenever $\rho_B$ is not a pure state.

\begin{lemma}\label{bellpoly} Suppose $1=p_1\ge\cdots\ge p_{n+1}\ge0$. Then
\begin{align} \frac{1}{(n+1)!}B_{n+1}(p_1,\ldots,n!p_{n+1})\le\frac{1}{n!}B_n(p_1,\ldots,(n-1)!p_n).
\end{align}
\end{lemma}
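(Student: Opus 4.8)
My plan is to recognize the two sides of the claimed inequality as instances of the cycle index polynomial and then exploit the recurrence relation \eqref{recurrence} together with Abel summation. To this end, introduce the abstract quantity
\begin{align*}
Q_k \coloneqq \frac{1}{k!}B_k\!\left(p_1,p_2,2!\,p_3,\ldots,(k-1)!\,p_k\right),
\end{align*}
so that the lemma is exactly the assertion $Q_{n+1}\le Q_n$. By the identity \eqref{bellpolyAP} and the partition-sum expression in Theorem~\ref{formulathm}, $Q_k$ equals the cycle index polynomial $Z(S_k)$ evaluated at $x_j=p_j$, namely $Q_k=\sum_{a_1+2a_2+\cdots+ka_k=k}\prod_{j=1}^k \frac{p_j^{a_j}}{j^{a_j}a_j!}$. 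Two consequences will be used repeatedly. First, since every $p_j\ge 0$, each summand is nonnegative, so $Q_k\ge 0$ for all $k$. Second, the recurrence \eqref{recurrence} is a formal polynomial identity in the variables $x_j$, hence it holds verbatim with $\tr[\rho^j]$ replaced by the abstract $p_j$: $Q_k=\frac{1}{k}\sum_{j=1}^{k}p_j\,Q_{k-j}$ with the convention $Q_0=1$.

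The heart of the argument is to compute the difference $Q_n-Q_{n+1}$ directly from this recurrence. Writing $(n+1)Q_n = nQ_n + Q_n = \sum_{j=1}^{n}p_j Q_{n-j} + Q_n$ and subtracting the expansion $(n+1)Q_{n+1}=\sum_{j=1}^{n+1}p_j Q_{n+1-j}$, I would reindex the second sum and collect terms sharing the same argument $Q_{n-j}$. This Abel-type rearrangement is designed to produce the clean telescoped form
\begin{align*}
(n+1)\left(Q_n-Q_{n+1}\right)=\left(1-p_1\right)Q_n+\sum_{j=1}^{n}\left(p_j-p_{j+1}\right)Q_{n-j}.
\end{align*}
The crucial feature is that the leading coefficient is $1-p_1$: the normalization $p_1=1$ from the hypothesis kills this term exactly, leaving $(n+1)(Q_n-Q_{n+1})=\sum_{j=1}^{n}(p_j-p_{j+1})Q_{n-j}$.

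Once this identity is in hand the conclusion is immediate: the hypothesis $1=p_1\ge p_2\ge\cdots\ge p_{n+1}\ge 0$ gives $p_j-p_{j+1}\ge 0$ for each $j=1,\ldots,n$, and the nonnegativity $Q_{n-j}\ge 0$ noted above makes every summand nonnegative, so $Q_n-Q_{n+1}\ge 0$. I expect the only delicate step to be the bookkeeping in the Abel rearrangement—keeping the index shift straight so that the $Q_n$ terms align and the $1-p_1$ coefficient is correctly isolated; everything after that is a sign check. As a sanity check on the setup, the cases $k=0,1,2$ give $Q_0=Q_1=1$ and $Q_2=(1+p_2)/2\le 1$, consistent with the claimed monotonicity and confirming that equality persists precisely when the tail $p_j-p_{j+1}$ vanishes, which is the behavior the subsequent strict-decrease corollary will sharpen.
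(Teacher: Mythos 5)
Your proof is correct and is essentially the paper's own argument in normalized form: the paper runs the Bell-polynomial recurrence $B_{n+1}=\sum_{j=0}^n\binom{n}{j}B_{n-j}x_{j+1}$, which upon dividing by $(n+1)!$ is exactly your cycle-index recurrence $kQ_k=\sum_{j=1}^k p_j Q_{k-j}$, and its chain of equalities produces precisely your telescoped identity $(n+1)(Q_n-Q_{n+1})=(1-p_1)Q_n+\sum_{j=1}^n(p_j-p_{j+1})Q_{n-j}$ before invoking the same termwise sign check $p_{j+1}\le p_j$. The only differences are cosmetic: you avoid the binomial-coefficient bookkeeping and you explicitly justify the nonnegativity $Q_{n-j}\ge 0$ (via the partition-sum formula), which the paper uses implicitly.
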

\begin{proof} The complete Bell polynomial satisfies the recurrence relation \cite{bell1934exponential}
\begin{align} B_{n+1}(x_1,\ldots,x_{n+1})&=\sum_{j=0}^n\binom{n}{j}B_{n-j}(x_1,\ldots,x_{n-j})x_{j+1}.
\end{align}
Letting $x_i=(i-1)!\ p_i$ and defining $B_j\coloneqq B_j(1,\ldots,(j-1)!p_j)$, we have
\begin{align}
&\frac{1}{n!}B_n-\frac{1}{(n+1)!}B_{n+1}\notag \\
&=\frac{1}{n!}B_n-\frac{1}{(n+1)!}\sum_{i=0}^n \binom{n}{i}B_{n-i}i!p_{i+1}\\
&=\frac{n}{(n+1)!}B_n-\frac{1}{(n+1)!}\sum_{i=1}^n \binom{n}{i}B_{n-i}i!p_{i+1}\\
&=\frac{n}{(n+1)!}B_n-\frac{n}{(n+1)!}\sum_{i=0}^{n-1} \binom{n-1}{i}B_{n-i-1}i!p_{i+2}\\
&\label{inequalityjustification}\ge\frac{n}{(n+1)!}B_n-\frac{n}{(n+1)!}\sum_{i=0}^{n-1} \binom{n-1}{i}B_{n-i-1}i!p_{i+1}\\
&=\frac{n}{(n+1)!}B_n-\frac{n}{(n+1)!}B_n\\
&=0,
\end{align}
where in the second equality, the $i=0$ term was combined with the first term, and the inequality is justified by $p_{i+2}\le p_{i+1}$. 
\end{proof}

\begin{proposition}\label{th:decreasing} The acceptance probability $p^{(k)}$ is strictly decreasing and  $\lim_{k\to\infty}p^{(k)}=0$ when $\rho_B$ is not a pure state.
\end{proposition}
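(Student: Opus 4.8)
The plan is to derive both claims — strict monotonicity and convergence to zero — directly from Lemma~\ref{bellpoly} together with the identification of $p^{(k)}$ as the normalized Bell polynomial in \eqref{bellpolyAP}. First I would set $p_j \coloneqq \tr[\rho_B^j]$ and observe that these are exactly the power sums of the eigenvalues $\{\lambda_i\}$ of $\rho_B$. Since $\rho_B$ is a density operator, its eigenvalues are nonnegative and sum to $1$, so $p_1 = \tr[\rho_B] = 1$ and the sequence satisfies $1 = p_1 \ge p_2 \ge \cdots \ge 0$. This is precisely the hypothesis $1 = p_1 \ge \cdots \ge p_{n+1} \ge 0$ required by Lemma~\ref{bellpoly}, so the lemma immediately yields $p^{(k+1)} \le p^{(k)}$ for every $k$, establishing that the sequence is non-increasing.

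To upgrade the inequality to \emph{strict} monotonicity when $\rho_B$ is not pure, I would revisit the single place in the proof of Lemma~\ref{bellpoly} where an inequality (rather than an equality) is used, namely \eqref{inequalityjustification}, which is justified term-by-term by $p_{i+2} \le p_{i+1}$. The plan is to show that at least one of these comparisons is strict. If $\rho_B$ is not pure, it has at least two nonzero eigenvalues, so $p_2 = \sum_i \lambda_i^2 < \left(\sum_i \lambda_i\right)^2 = 1 = p_1$, giving a strict drop $p_2 < p_1$. I would then need to confirm that this strict inequality actually propagates into the difference $\frac{1}{n!}B_n - \frac{1}{(n+1)!}B_{n+1}$ being strictly positive — the key point being that the coefficient multiplying the relevant $(p_{i+1} - p_{i+2})$ term in the sum is a positive Bell polynomial $B_{n-i-1}$ evaluated at nonnegative arguments, hence strictly positive. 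This gives $p^{(k+1)} < p^{(k)}$ for all $k \ge 1$.

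For the limit, I would exploit that a strictly decreasing sequence bounded below by $0$ converges to some limit $L \ge 0$, and then argue $L = 0$. The cleanest route is via the recurrence \eqref{recurrence}, $p^{(k)} = \frac{1}{k}\sum_{j=1}^k p_j\, p^{(k-j)}$: since $p_j \le 1$ and $p^{(k-j)} \le p^{(0)} = 1$, one gets control on the tail, but a sharper estimate is needed because the $j=1$ term alone gives $\frac{1}{k} p^{(k-1)}$, which is not enough on its own. Alternatively, and more transparently, I would use the closed symmetric-function form: $p^{(k)} = h_k(\lambda_1,\ldots,\lambda_d)$ is the complete homogeneous symmetric polynomial, whose generating function is $\sum_{k\ge 0} h_k\, t^k = \prod_i (1 - \lambda_i t)^{-1}$. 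The radius of convergence of this power series equals $1/\lambda_{\max}$, where $\lambda_{\max} = \max_i \lambda_i$; when $\rho_B$ is not pure, $\lambda_{\max} < 1$, so the series converges at $t = 1$, forcing its terms $h_k = p^{(k)} \to 0$.

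The main obstacle I anticipate is the strict-inequality step: verifying carefully that the strict drop $p_2 < p_1$ (or more generally some strict $p_{i+2} < p_{i+1}$) is not accidentally annihilated by a vanishing Bell-polynomial coefficient, so that the term-by-term strictness genuinely survives into the aggregate difference. Handling the degenerate indexing at small $n$ and ensuring the positivity of the relevant $B_{n-i-1}$ factor is where the argument must be made with care; the convergence claim, by contrast, follows cleanly from the generating-function analysis once the radius of convergence is pinned to $1/\lambda_{\max}$.
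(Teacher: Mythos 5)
Your proposal is correct, and it splits naturally into a part that mirrors the paper and a part that genuinely diverges. The monotonicity half is the paper's own argument: feed $p_j=\tr[\rho_B^j]$ into Lemma~\ref{bellpoly} and get strictness from the fact that \eqref{inequalityjustification} becomes strict when $\rho_B$ is mixed. You are actually more careful than the paper here, since you note that a strict gap $p_1-p_2>0$ only helps if its coefficient does not vanish; the one wrinkle is that your justification (``nonnegative arguments, hence strictly positive'') is not quite right as stated --- a Bell polynomial at nonnegative arguments is merely nonnegative --- but the fix is immediate: the first argument is $p_1=1$ and $B_m$ contains the monomial $x_1^m$ with coefficient $1$, so $B_{n-i-1}\ge 1>0$.

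Where you diverge is the limit. The paper stays with the recurrence \eqref{recurrence}: bounding every $p^{(k-j)}\le 1$ gives $p^{(k)}\le\frac{1}{k}\sum_{j=1}^{k}\tr[\rho_B^j]\le\frac{1}{k}\sum_{i}\frac{\lambda_i}{1-\lambda_i}\to 0$, so your worry that the recurrence ``is not enough on its own'' is unfounded --- the crude bound suffices once the geometric series for each eigenvalue is summed. Your alternative, via the generating function $\sum_{k\ge 0}h_k t^k=\prod_i(1-\lambda_i t)^{-1}$, is also valid (the paper itself identifies $p^{(k)}=h_k(\lambda_1,\ldots,\lambda_d)$, and the function is analytic on a disk of radius $1/\lambda_{\max}>1$ when $\rho_B$ is mixed, so the series converges at $t=1$ and its terms vanish). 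In fact it buys more than the paper's argument: it shows summability, $\sum_k p^{(k)}<\infty$, and pins the decay rate at $\limsup_k (p^{(k)})^{1/k}=\lambda_{\max}<1$, i.e., geometric decay, whereas the paper's Ces\`aro-type bound only yields $p^{(k)}=O(1/k)$.
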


\begin{proof}
Writing the acceptance probability in terms of the complete Bell polynomials as in \eqref{bellpolyAP} and noting that $1=\tr[\rho_B]>\tr[\rho_B^2]>\cdots>\tr[\rho_B^{k+1}]>0$ (since the eigenvalues of a mixed state are strictly less than one), we have
\begin{align}
    p^{(k+1)}&=\frac{1}{(k+1)!}B_{k+1}(1,\ldots,k!\tr[\rho_B^{k+1}])\\
    &<\frac{1}{k!}B(1,\ldots,(k-1)!\tr[\rho_B^k])\\
    &=p^{(k)},
\end{align}
where we have applied Lemma \ref{bellpoly}, noting that the inequality \eqref{inequalityjustification} is now strict because the inequalities between traces of powers of $\rho_B$ are strict as noted above. Thus, $p^{(k)}$ is strictly decreasing. Since the acceptance probability is given by the cycle index polynomial of the symmetric group, it satisfies the recurrence
\begin{align}
    p^{(k)}&=\frac{1}{k}\sum_{j=1}^k\tr[\rho^j]p^{(k-j)}.
\end{align}
Let $r$ be the total number of eigenvalues of $\rho_B$. From the fact that $p^{(k-j)}\le1$, we have
\begin{align} p^{(k)}&\le\frac1k\sum_{j=1}^k\tr(\rho_B^j)\\
&=\frac1k\sum_{j=1}^k\sum_{i=1}^r\lambda_i^j\\
&=\sum_{i=1}^r\frac1k\sum_{j=1}^k\lambda_i^j\\
&=\sum_{i=1}^r\frac1k\bigg(\frac{1-\lambda_i^{k+1}}{1-\lambda_i}-1\bigg).
\end{align}
Since $\rho_B$ is not a pure state, we have $\lambda_i<1$ for all $i$, so that $\lambda_i^{k+1}\to 0$ as $k\to \infty$ and the acceptance probability therefore goes to zero in the $k\to\infty$ limit.
\end{proof}

These results indicate that as $k\to\infty$, fewer repetitions of the test are needed to determine whether a given pure state is entangled. There is a trade-off, however, between increasing $k$ and the computational resources needed to conduct a single test. As $k$ increases, one might suspect that the resources needed will increase in such a way that a large enough $k$ is not feasible. Indeed, as one of our results, we discuss the scaling in this claim in Section~\ref{sec:comparison}.

\section{Generalization of the Algorithm}

\label{generalization}

It is possible to consider a generalization of the bipartite pure-state separability algorithm to groups other than the symmetric group. Furthermore, we can show that these algorithms are also separability tests. Let $G$ be a finite group, and let $\psi_{AB}$ be a pure state. Recall that Cayley's theorem \cite{Dummit_Foote} guarantees that every finite group is isomorphic to a subgroup of a permutation group. Then there is a representation of $G$ that takes every $g\in G$ to an element $\pi\in S_k$ for some $k\in\mathbb{N}$ and maps $\pi$ to the operator that permutes the Hilbert spaces in the composite Hilbert space $\mathcal{H}^{\otimes k}$. Then a generalization of the bipartite pure-state separability algorithm is given by performing a $G$-Bose symmetry test on the state $\psi_{AB}^{\otimes k}$.

By the argument in the proof of Theorem~\ref{formulathm}, we see that one simply has to count the number of cycles of any given cycle type in the permutation subgroup isomorphic to $G$ to obtain a formula for the acceptance probability of the algorithm. That is, the argument in Theorem~\ref{formulathm}, combined with Cayley's theorem, proves the following theorem:

\begin{theorem}
\label{generalformula}
Let $p_G$ denote the acceptance probability with respect to the group $G$ for the generalization of the bipartite pure-state separability algorithm. Then
\begin{align} \label{pG} p_G=Z(G)(1,\ldots,\tr[\rho^k]).
\end{align}
\sloppy That is, the acceptance probability $p_G$ is given by the cycle index polynomial \eqref{cycleindex} of~$G$ evaluated at $x_j=\tr[\rho^j]$ for $j \in \{1,\ldots,k\}$.
\end{theorem}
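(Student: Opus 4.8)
The plan is to reduce Theorem~\ref{generalformula} to Theorem~\ref{formulathm} by reusing the trace-counting computation already established there, now restricted to the subgroup of $S_k$ isomorphic to $G$. First I would invoke Cayley's theorem to fix an injective homomorphism $\varphi:G\to S_k$ (with $k=|G|$ for the regular representation, though any faithful permutation representation will do), and I would set up the $G$-Bose symmetry test by defining the unitary representation $U_S(g)\coloneqq I_{A_1\cdots A_k}\otimes W_{B_1\cdots B_k}(\varphi(g))$. The acceptance probability is then $p_G=\tr[\Pi^G_S\,\rho_B^{\otimes k}]$ where $\Pi^G_S=\frac{1}{|G|}\sum_{g\in G}U_S(g)$, exactly as in the general $G$-Bose framework recalled at the start of Section~\ref{formula}.

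The key step is to observe that the trace computation from the proof of Theorem~\ref{formulathm} is entirely local to each permutation: it was shown there that for any permutation $\pi\in S_k$ with disjoint cycle decomposition, $\tr[W_{B_1\cdots B_k}(\pi)\,\rho_B^{\otimes k}]=\prod_{m}\tr[\rho_B^{m}]$, where the product runs over the lengths $m$ of the disjoint cycles of $\pi$. This identity does not depend on $\pi$ ranging over the full symmetric group; it holds for each individual permutation. Hence I would apply it to each $\varphi(g)$ and write
\begin{align}
p_G&=\frac{1}{|G|}\sum_{g\in G}\tr[W_{B_1\cdots B_k}(\varphi(g))\,\rho_B^{\otimes k}]\\
&=\frac{1}{|G|}\sum_{g\in G}\prod_{j=1}^{k}\tr[\rho_B^{\,j}]^{\,c_j(\varphi(g))},
\end{align}
where $c_j(\varphi(g))$ is the number of $j$-cycles in the disjoint cycle decomposition of $\varphi(g)$.

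Comparing the last line with the definition of the cycle index polynomial in \eqref{cycleindex}, the substitution $x_j=\tr[\rho_B^{\,j}]$ yields precisely $Z(G)(1,\ldots,\tr[\rho^k])$, since the $j=1$ variable evaluates to $\tr[\rho_B]=1$. This establishes \eqref{pG}. I would also note, as the paper asserts, that these generalized tests remain genuine separability tests: if $\psi_{AB}$ is separable then $\rho_B$ is pure, so $\tr[\rho_B^{\,j}]=1$ for all $j$ and every factor in the product is $1$, giving $p_G=1$ (acceptance with certainty) regardless of the group chosen.

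The main obstacle is not the algebra, which is a direct transcription of the Theorem~\ref{formulathm} argument, but the bookkeeping around the representation itself. One must be careful that the homomorphism $\varphi$ is \emph{faithful} so that $\Pi^G_S$ is genuinely the projection associated with a $|G|$-element group acting by distinct unitaries, and that the counting function $c_j$ is evaluated on the image permutation $\varphi(g)$ rather than on an abstract group element (the cycle type is a property of the permutation, not of $g$). A subtlety worth flagging is that the cycle index depends on the chosen permutation representation, not merely on the abstract isomorphism type of $G$, so the statement implicitly fixes that representation via Cayley's theorem; different faithful actions can give different acceptance probabilities, and this is exactly the freedom the subsequent sections exploit when comparing, e.g., the cyclic and symmetric group tests.
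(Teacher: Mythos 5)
Your proposal is correct and follows essentially the same route as the paper: the paper's proof of Theorem~\ref{generalformula} is precisely the observation that Cayley's theorem identifies $G$ with a permutation subgroup of $S_k$, after which the per-permutation trace identity from the proof of Theorem~\ref{formulathm} (each $m$-cycle contributes $\tr[\rho_B^m]$, with disjoint cycles factoring the trace) is summed over the image of $G$ and matched against the definition of the cycle index polynomial in \eqref{cycleindex}. Your added remarks on faithfulness and on the dependence of $Z(G)$ on the chosen permutation representation are sound and, if anything, make explicit a point the paper leaves implicit.
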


As an aside, we note that \eqref{pG} has an interesting combinatorial meaning. Let $\{\lambda_i\}_{i=1}^r$ denote the eigenvalues of $\rho$. By Pólya's enumeration theorem \cite{brualdi2010,tucker1995}, we can interpret \eqref{pG} as a generating function for the number of nonequivalent colorings of a set $S$ with the $r$ colors $\{\lambda_i\}_{i=1}^r$. The role of $G$ here is to define the equivalence between colorings through its action on $S$.

We now list several examples involving finite groups, including the symmetric groups, which reproduces the result obtained in Section~\ref{formula} (the acceptance probability of the bipartite pure-state separability test). The most trivial example is the identity group.

\begin{example}
Let $G=\{e\}$ be the identity group and let $W_{B_1\cdots B_k}:G\to U(\mathcal{H}_{B_1\cdots B_k})$ be the unitary representation mentioned above. Then $W_{B_1\cdots B_k}(e)=I$, and the acceptance probability is given by the cycle index polynomial of the identity group. That is,
\begin{align}
p_{\text{id}}^{(k)}=\tr[\rho_B]^k=1.
\end{align}
\qed
\end{example}

\begin{example}
Let $G=C_k=\langle g\rangle$ be the cyclic group of rotations of a regular $k$-gon. We have many options for the representation, as there are in general many $k$-cycles to which we can map the generator of $C_k$. However, each choice produces an isomorphic permutation subgroup, and so we may as well choose $W_{B_1\cdots B_k}(g)=(1\ 2\ \cdots\ k)$. Then the acceptance probability is given by the cycle index polynomial of $C_k$. That is,
\begin{align} p_{\text{cyc}}^{(k)}=\frac1k\sum_{m|k}\phi(m)(\tr[\rho_B^m])^{k/m},
\end{align}
where the sum is over all $m$ dividing $k$, and $\phi(m)$ is the Euler $\phi$-function \cite{Dummit_Foote}, which gives the number of $n\in\mathbb{N}$ such that $n\le m$ and gcd$(n,m)=1$.
\qed
\end{example}

\begin{example}
Let $G=D_k$ be the dihedral group, which is generated by the rotation~$r$ and reflection $f$ of a regular $k$-gon. As for the case of the cyclic group, we choose a unitary representation so that it sends the rotation $r$ to a $k$-cycle. The reflection $f$ is then mapped to a product of $\lfloor{k/2}\rfloor$-cycles. Then the acceptance probability is given by the cycle index polynomial of $D_k$. That is, when $k$ is even,
\begin{align}
p_{\text{dihedral}}^{(k)}&=\frac{1}{2k}\sum_{m|k}\phi(m)(\tr[\rho_B^m])^{k/m} +\frac14\bigg((\tr[\rho_B^2])^{(k-2)/2}+(\tr[\rho_B^2])^{k/2}\bigg)\\
&=\frac12 p^{(k)}_{\text{cyc}}+\frac14\bigg((\tr[\rho_B^2])^{(k-2)/2}+(\tr[\rho_B^2])^{k/2}\bigg),
\end{align}
and when $k$ is odd,
\begin{align}p_{\text{dihedral}}^{(k)}&=\frac{1}{2k}\sum_{m|k}\phi(m)(\tr[\rho_B^m])^{k/m}+\frac12(\tr[\rho_B^2])^{(k-1)/2}\\
&=\frac12 p^{(k)}_{\text{cyc}}+\frac12(\tr[\rho_B^2])^{(k-1)/2}.
\end{align}
\qed
\end{example}

\begin{example}
Let $G=A_k$ be the alternating group, which is already a permutation group. Then the acceptance probability is given by the cycle index polynomial of $A_k$. That is,
\begin{align}
p_{\text{alt}}^{(k)}=\sum_{a_1+2a_2+\cdots +ka_k=k}\prod_{j=1}^k\frac{1+(-1)^{a_2+a_4+\cdots}}{j^{a_j}a_j!}(\tr[\rho_B^j])^{a_j},
\end{align}
where $(-1)^{a_2+a_4+\cdots}$ denotes $(-1)^{a_2+a_4+\cdots+a_k}$ if $k$ is even and $(-1)^{a_2+a_4+\cdots+a_{k-1}}$ if $k$ is odd.
\qed
\end{example}

\begin{example}
We also list the example already discussed in Theorem~\ref{formulathm}. Let $G=S_k$ be the symmetric group, which is already a permutation group. Then the acceptance probability is given by the cycle index polynomial of $S_k$. That is,
\begin{align}p_{\text{sym}}^{(k)}=\sum_{a_1+2a_2+\cdots +ka_k=k}\prod_{j=1}^k\frac{(\tr[\rho_B^j])^{a_j}}{j^{a_j}a_j!}.
\end{align}
\qed
\end{example}

\begin{example} \sloppy Let $G=Q_8$ be the quaternion group. We can represent $Q_8$ as a subgroup of $S_8$ by identifying $\{1,-1,i,-i,j,-j,k,-k\}$ with 
$\{e,(1\ 2)(3\ 4)(5\ 6)(7\ 8),(1\ 3\ 2\ 4)(5\ 7\ 6\ 8),(1\ 4\ 2\ 3)(5\ 8\ 6\ 7),(1\ 5\ 2\ 6)(3\ 8\ 4\ 7),$
$(1\ 6\ 2\ 5)(3\ 7\ 4\ 8),(1\ 7\ 2\ 8)(3\ 5\ 4\ 6),(1\ 8\ 2\ 7)(3\ 6\ 4\ 5)\}$. Then the acceptance probability is given by the cycle index polynomial of $Q_8$, which can be read off from the permutation representation. That is,
\begin{align} p_{\text{quat}}=\frac18\bigg(1+(\tr[\rho_B^2])^4+6(\tr[\rho_B^4])^2\bigg).
\end{align}
\qed
\end{example}

\begin{example} 
In this example, we generalize the cyclic test to products of cyclic groups. Let $G=\mathbb{Z}_m^k$ be the product of $k$ copies of the group $\mathbb{Z}_m$. We represent $G$ as a permutation subgroup by labeling its elements and letting them act on the group to construct a permutation. For example, if $k=1$, then $G=\{0,1,\ldots,m-1\}$. Since 0 has no effect on any element of the group, we map it to the identity element $e$. Meanwhile, 1 acts on each element of the group by sending 0 to 1, 1 to 2, and so on. So we identify 1 with the cycle $(1\ \cdots\ m)$. The remaining permutations are defined similarly. 

Returning to the more general setting, we see that the elements of each order $n$ correspond to products of $n$-cycles. Now, for an element of $G$ to have order $n$, each component must contain an element of an order that divides $n$, with at least one component filled by an element of order $n$. So the number of elements of order $n$ is given by
\begin{align} \sum_{i=1}^k\binom{k}{i}(\phi(n))^i\bigg(\sum_{\substack{l|n\\ l<n}}\phi(l)\bigg)^{k-i}
&=\bigg(\phi(n)+\sum_{\substack{l|n\\ l<n}}\phi(l)\bigg)^k-\bigg(\sum_{\substack{l|n\\ l<n}}\phi(l)\bigg)^k\\
&=n^k-(n-\phi(n))^k,
\end{align} 
where $\phi$ denotes the Euler $\phi$-function. The acceptance probability given by the cycle index polynomial of $\mathbb{Z}_m^k$ is then
\begin{align} p_{\mathbb{Z}_m^k}^{(k)}=\frac{1}{m^k}\sum_{n|m}(n^k-(n-\phi(n))^k)(\tr[\rho_B^n])^{\frac{m^k}{n}}.
\end{align}
\qed
\end{example}

Finally, we show that the above nontrivial examples, as well as any other example involving a nontrivial finite group, are tests for separability of a pure bipartite state. Thus, we have produced an entire class of separability tests.

\begin{proposition}
Let $\psi_{AB}$ denote a pure bipartite state. Then the generalized bipartite pure-state separability algorithm is, in fact, a faithful test for separability of $\psi_{AB}$ for any nontrivial finite group $G$, meaning that the acceptance probability is equal to one if and only if the pure state is a separable state.
\end{proposition}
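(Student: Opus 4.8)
The plan is to translate separability into a statement about the purity of the reduced state and then read the equivalence directly off the cycle index polynomial of Theorem~\ref{generalformula}. By the Schmidt decomposition, a pure bipartite state $\psi_{AB}$ is separable, i.e., of the product form \eqref{eq:sep-pure-state}, if and only if $\rho_B=\tr_A[\psi_{AB}]$ has Schmidt rank one, equivalently if and only if $\rho_B$ is a pure state. I would record this fact first, together with its consequence for the power traces: if $\rho_B$ is pure then its eigenvalues are $\{1,0,0,\ldots\}$ and hence $\tr[\rho_B^j]=1$ for every $j\ge 1$, whereas if $\rho_B$ is not pure then it has at least two nonzero eigenvalues lying strictly in $(0,1)$, so that $\tr[\rho_B^j]=\sum_i\lambda_i^j<\sum_i\lambda_i=1$ for every $j\ge 2$, while always $\tr[\rho_B]=1$.

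For the reverse implication, suppose $\psi_{AB}$ is separable. Then every variable $x_j=\tr[\rho_B^j]$ equals $1$, so each monomial $x_1^{c_1(g)}\cdots x_k^{c_k(g)}$ in the cycle index polynomial \eqref{cycleindex} evaluates to $1$, and Theorem~\ref{generalformula} gives
\begin{align}
p_G=Z(G)(1,\ldots,1)=\frac{1}{|G|}\sum_{g\in G}1=1.
\end{align}
This direction requires nothing about $G$ beyond finiteness.

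For the forward implication I would argue the contrapositive. Assume $\psi_{AB}$ is not separable, so $\rho_B$ is not pure and $\tr[\rho_B^j]<1$ for all $j\ge 2$. Every factor in a monomial of the cycle index polynomial is then a number in $(0,1]$, so each monomial is at most $1$. The identity element consists of $k$ fixed points and contributes exactly $(\tr[\rho_B])^k=1$. The key observation is that any non-identity permutation necessarily contains a cycle of some length $j\ge 2$, which contributes a factor $\tr[\rho_B^j]<1$ and forces its monomial to be strictly less than $1$. Since the permutation representation furnished by Cayley's theorem is faithful and $G$ is nontrivial, the image contains at least one such non-identity permutation, whence
\begin{align}
p_G=\frac{1}{|G|}\sum_{g\in G}\prod_{j=1}^k(\tr[\rho_B^j])^{c_j(g)}<\frac{1}{|G|}\cdot|G|=1.
\end{align}
This proves the contrapositive and completes the argument.

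I expect the only genuine subtlety to be the role of the nontriviality hypothesis together with faithfulness of the representation: it is precisely the presence of a non-identity permutation in the image that injects a strict factor $\tr[\rho_B^j]<1$ into the sum. For the trivial group no such element exists, the bound collapses to the single identity term, and indeed $p_{\text{id}}^{(k)}=1$ irrespective of entanglement, exactly as in the earlier identity-group example. Everything else reduces to the elementary estimate $0<\tr[\rho_B^j]\le 1$ combined with the Schmidt-rank criterion for separability.
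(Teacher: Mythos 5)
Your proposal is correct and follows essentially the same route as the paper's own proof: both directions reduce to evaluating the cycle index polynomial of Theorem~\ref{generalformula} at $x_j=\tr[\rho_B^j]$, with separability giving all traces equal to one (hence $p_G=1$) and non-separability giving $\tr[\rho_B^j]<1$ for $j\ge 2$, so that nontriviality of $G$ injects at least one monomial strictly below one. If anything, your version is slightly more explicit than the paper's on two minor points---that the relevant cycles must have length $j\ge 2$ (fixed points contribute harmless factors of $\tr[\rho_B]=1$) and that faithfulness of the Cayley representation is what guarantees a non-identity permutation in the image---but these are refinements of, not departures from, the same argument.
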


\begin{proof}
Suppose $\psi_{AB}$ is separable. That is, $\ket{\psi}_{AB}=\ket{\phi}_A\otimes\ket{\varphi}_B$ for some states $\ket{\phi}_A\in\mathcal{H}_A$ and $\ket{\varphi}_B\in\mathcal{H}_B$. Then
\begin{equation}
\rho_B \coloneqq \tr_A[\psi_{AB}]
=\tr_A[\ket{\phi}\!\!\bra{\phi}_A\otimes\ket{\varphi}\!\!\bra{\varphi}_B]
=\ket{\varphi}\!\!\bra{\varphi}_B.
\end{equation}
That is, $\rho_B$ is a pure state. From Theorem~\ref{generalformula}, the acceptance probability of the algorithm is given by the cycle index polynomial evaluated at the traces of increasing powers of $\rho_B$. But since $\rho_B$ is pure, $\tr[\rho_B^j]=1$ for all $j\in\{1,\ldots,n\}$. Then the acceptance probability is equal to the cycle index polynomial at $x_j=1$ for all $j\in\{1,\ldots,n\}$. That is,
\begin{align}
p_G&=Z(G)(1,\ldots,\tr[\rho^n])\\
&=Z(G)(1,\ldots,1)\\
&=\frac{1}{|G|}\sum_{g\in G} 1^{c_1(g)}\cdots1^{c_n(g)}\\
&=\frac{1}{|G|}\sum_{g\in G} 1\\
&=1
\end{align}
where $c_i(g)$ denotes the number of cycles of length $i$ in the disjoint cycle decomposition of $g$. Thus, $\psi_{AB}$ separable implies that the acceptance probability is identically one.

Now suppose $\rho_B$ is a mixed state. Then $\tr[\rho_B^j]<1$ for all $j>1$ and we have
\begin{align} p_G&=Z(G)(1,\tr[\rho_B^2],\ldots,\tr[\rho_B^n])\\
&=\frac{1}{|G|}\sum_{g\in G} 1^{c_1(g)}(\tr[\rho_B^2])^{c_2(g)}\cdots(\tr[\rho_B^n])^{c_n(g)}\\
&<\frac{1}{|G|}\sum_{g\in G} 1^{c_1(g)}\cdots1^{c_n(g)}\\
&=\frac{1}{|G|}\sum_{g\in G} 1\\
&=1,
\end{align}
where we have used the assumption that $G$ is nontrivial to guarantee that at least one of the $c_j(g)$ is nonzero so that the inequality holds. Thus, the test is faithful.
\end{proof}

\section{Resource Comparison of Symmetry Tests}

\label{sec:comparison}

Given the generalization in Section~\ref{generalization}, we can now compare the performance of these separability tests. There are two practical concerns to consider when implementing such a test: the rate at which the acceptance probability decays and the resources required to construct it. The cycle index polynomial results described above allow for direct analysis of the former topic, but the latter requires additional consideration before it can be adequately addressed. First, we will specify how resources are counted for each algorithm. Then we compare the resource cost for each algorithm given this framework. We accompany this with a discussion of the acceptance probability of the compared methods.

We now clarify what is meant by resources in this context. For the $G$-Bose symmetry test described in \cite{margo} and tests of that nature, the two primary resources are the number of gates used to construct the test and how many qubits are needed in the control register. We begin with a discussion of gate counting.

\subsection{Resource Counting of Quantum Gates}

The unitary representation in this context is always formed from a collection of SWAP gates used to permute the subsystems. SWAP gates can be realized by a sequence of three CNOT gates in alternating direction. Often, the literature commonly counts the number of CNOT gates used as a resource (see, e.g., \cite{grassl2000cyclic})
; however, particular architectures may have more efficient realizations of the SWAP gate. Furthermore, this algorithm actually calls for controlled-SWAP gates, which may have vastly different compilations between architectures. For the purposes of this discussion, we will be counting the necessary number of controlled-SWAPs alone. Furthermore, we do not restrict to swapping between consecutive Hilbert spaces, although in principle this could be a limitation of particular systems.

Here, we give an explicit construction for two example groups. The first is the cyclic group test, which is a simple Abelian subgroup of the symmetric group and therefore of interest as a point of comparison. Although constructions of cyclic shifts exist in the literature, our construction follows binary encoding procedures \cite{babbush2018encoding,low2019hamiltonian} and uses fewer gates than a na\"ive implementation and thus warrants discussion. The second construction given describes a recursive implementation of the full permutation test.  Similarly, although the quantum Schur transform \cite{BCH06,BCH07,krovi2019efficient} gives a recipe for implementing the symmetric group in principle, the gate construction is abstract and thus difficult to use for accurate gate counts compared to other approaches. As such, we utilize the construction given in \cite{barenco1997stabilization}. In the following two subsections, we show that a cyclic group test can be implemented with $\mathcal{O}(k \log (k))$ controlled-SWAP gates and a full symmetric group test (also known as a permutation test) with $\mathcal{O}(k^2)$ controlled-SWAP gates.

Note that our tests are only valid for pure states. If it is unknown whether the input state is pure or mixed, one can first check whether it is indeed pure. To accomplish this, perform a swap test on the state in question to verify that it has purity approximately equal to one and is sufficiently pure.

\subsubsection{Cyclic Group}

Analysis of the cyclic group benefits from established literature. Any cyclic permutation can be achieved in constant depth with $k-1$ gates, where $k$ is the order of the cycle \cite{grassl2000cyclic}. We will now show that any cyclic group test can be generated by implementing solely the elements in that cycle that are powers of two. This means that the resource cost of implementing the cyclic test of order $k$ is $(k-1)\log_2 (k)$, and the constant depth condition above from \cite{grassl2000cyclic} gives a corresponding depth of $\mathcal{O}(\log_2(k))$ in the separability test.

First, recall that the $k$-order cyclic group is isomorphic to the set $\mathbb{Z}_k$ of integers modulo $k$ under addition. This will allow us to symbolically represent each element by a single number, understood in this context to be modulo $k$. 

Since the case of $k=1$ is trivial, let us first consider the base case of $k=2$. This example illustrates the general construction of cyclic tests and recreates the well-established swap test \cite{barenco1997stabilization,buhrman2001quantum}. The controlled-SWAP element corresponds to the element $1 = 2^0$, and is the sole gate needed, and the identity element is naturally $0$. (Note that $1$ is the sole power of two in $\mathbb{Z}_2 = \{0,1\}$.) The control state for this test is given by a single qubit state of
\begin{equation}
    \ket{+}_{C(2)} =\frac{1}{\sqrt{2}}(\ket{0} + \ket{1})\, ,
\end{equation}
where we employ the computational basis. It is clear that each element in the ancillary basis will give rise to its corresponding group element with this test.

How does this construction generalize? For each given $k$, we follow a similar recipe as above. As $C_k$ is isomorphic to $\mathbb{Z}_k$, start by identifying each cycle in $C_k$ with a number in $\mathbb{Z}_k$. If we always map the first $k$-cycle to one, then this map follows simply by mapping cycle composition to integer addition by one. Consider, for instance, the case of $C_5$. Then the first cycle is $(1\ 2\ 3\ 4\ 5)$. Map this to $1$. Then the next element, $(1\ 3\ 5\ 2\ 4)=(1\ 2\ 3\ 4\ 5)(1\ 2\ 3\ 4\ 5)$ maps to $1+1=2$. After we have identified each element of $C_k$ with an element of $\mathbb{Z}_k$, we can always rewrite these numbers in binary. The beauty of binary construction, as is well appreciated in computer science, is that only elements corresponding to powers of two need to be individually defined, and every other number can be generated from combinations of them. Thus, after this second rewrite, we have a mapping between every cycle in $C_k$ and a binary number. Now to construct the circuit, we only need to implement controlled gates that correspond to cycles that have mapped to a power of two. For $C_5$, this would be gates that have mapped to $001$, $010$, and $100$ (in decimal: 1, 2, and 4 respectively). 

To show how this construction grows, it is most convenient to denote the gates by which power of two they implement.  In Figure~\ref{fig:cyclic}, we label gates as $2^j$ where $j$ ranges from 0 to $\lfloor \log_2(k-1)\rfloor$. To see why $\lfloor \log_2(k-1)\rfloor$ is the final gate, recall the convention that $\mathbb{Z}_k$ always contains $0$ instead of $k$. Then the bound falls out from inspection. Revisiting our above example of $k=5$, the gates we identified as necessary can be equivalently represented as $001=2^0$, $010=2^1$, and $100=2^2$.

This construction can also be achieved by considering the labeling of the control state. If the computational basis is read as a number in binary, we can clearly define the relationship between the computational basis and the group element construction as $\ket{g}=\ket{g_{\textrm{binary}}}=\ket{g_{\textrm{decimal}}}$, where the abstract construction is equivalent to a computational basis in binary, which equivalently realizes the familiar group element in decimal. For example, following the above convention, the basis state for $k=5$ given by $\ket{(1\ 3\ 5\ 2\ 4)}=\ket{10}=\ket{2}$ indicates that the element $(1\ 3\ 5\ 2\ 4)$ can be labeled as the $2$ element of the group. As $2$ is obviously a power of $2$, this group element must be encoded in the circuit. This construction is shown generally in Figure~\ref{fig:cyclic} and for our specific example of $k=5$ in Figure~\ref{fig:cyclicexample}. Note that all elements of $C_k$ will take at most $k-1$ SWAP gates to implement. 

Furthermore, note that cyclic permutations can be implemented in a constant depth of two \cite{grassl2000cyclic}. To maintain this depth even for the controlled gates, a GHZ state, $\frac{1}{\sqrt{2}}(\ket{0}^{\otimes m} +\ket{1}^{\otimes m})$, where $m=\lfloor k/2\rfloor$, can be used instead of a single plus state, similar to the approach employed in \cite{yihui}. Then the controls can act on different qubits of the state, and the final measurement is taken by projecting back to the GHZ state. This state preparation and the corresponding measurement may add complexity to the ancilla register; however, since the circuit to prepare a $k$-qubit GHZ state has depth $\mathcal{O}(\log_2(k))$ (with the circuit to project onto it being its inverse), this gives the cyclic group test a depth that grows as $\mathcal{O}(\log_2 (k))$.

To see that this circuit is capable of generating every element of the cyclic group, we again refer to the isomorphism between $C_k$ and $\mathbb{Z}_k$. Writing every element of $\mathbb{Z}_k$ in binary, it becomes obvious that every element can be written as an addition of powers of 2 that form the basis of binary numbers. As such, only elements corresponding to new ``digits" need to be considered.

\begin{figure}[ptb]
\begin{center}
\includegraphics[
width=4.5 in
]{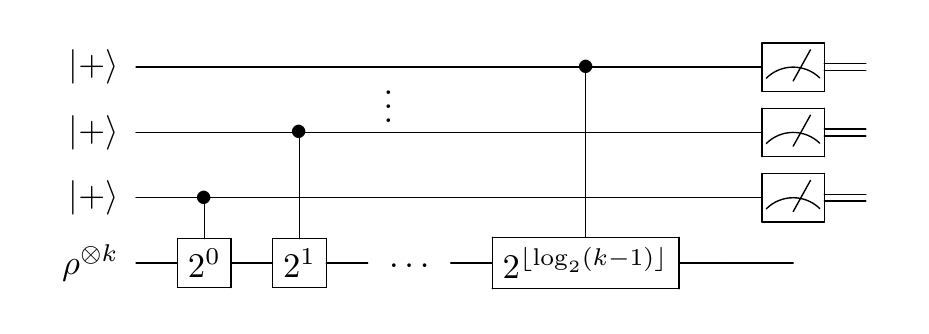}
\end{center}
\caption{Figure demonstrating how to systematically generate a test for the cyclic group of order $k$. The notation $(2^j)$ indicates the unitary representation of the element in $C_k$ labeled by the $j$-th power of two. Alternatively, this element is obtained by the full $k$-cycle $(1,2,\ldots,k)$ acting on itself $2^j$ times. Note that the final power is always given by $\lfloor \log_2(k-1)\rfloor.$ Also, $|+\rangle = (\ket{0}+\ket{1})/\sqrt{2}$ in the circuit diagram above and the final measurements are performed in the Hadamard basis, accepting if all $+1$ outcomes occur.}
\label{fig:cyclic}%
\end{figure}

\begin{figure}[ptb]
\begin{center}
\includegraphics[
width=5 in
]{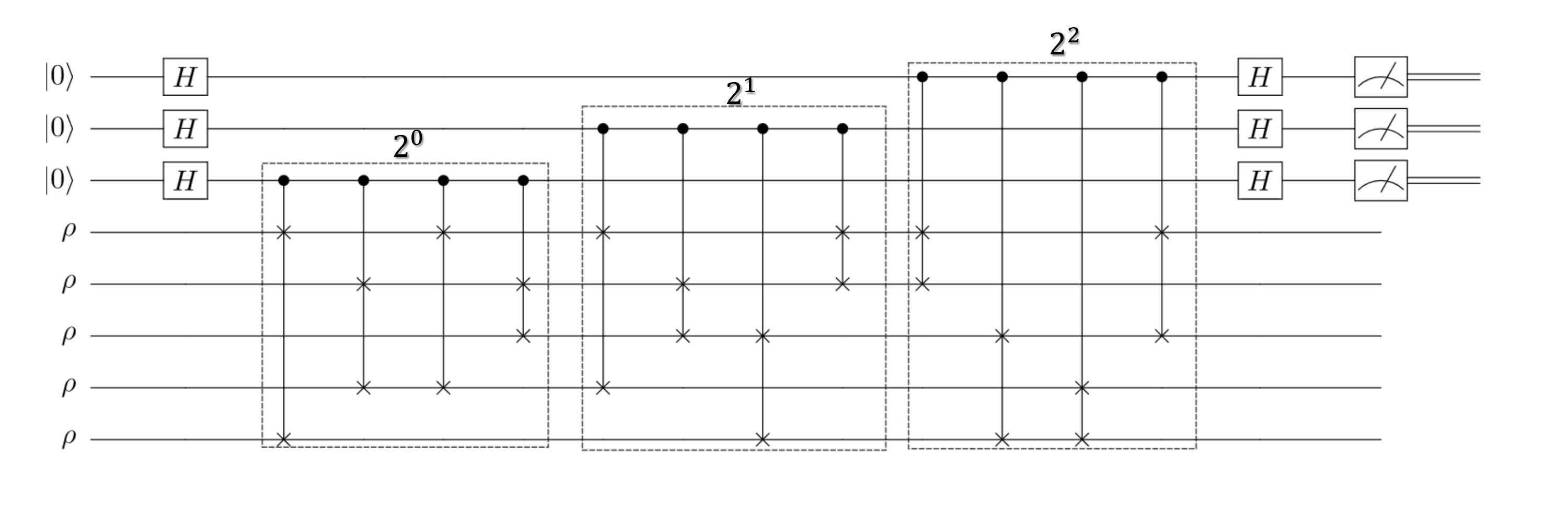}
\end{center}
\caption{An example of the cyclic group test for $k=5$. The notation $(2^j)$ indicates the unitary representation of the element in $C_k$ labeled by the $j$-th power of two. For this case, only the elements corresponding to $2^0$, $2^1$, and $2^2$ contribute. Notice that, if the gates are not controlled on the same qubit, each individual cycle collapses to a depth of two with $k-1$ gates.}
\label{fig:cyclicexample}%
\end{figure}

\subsubsection{Symmetric Group}

We now review a recursive algorithm for the construction of the symmetric group.  Necessary to this construction is the proof that the entire group $S_k$ can be generated in a convenient way, using solely transpositions. This construction is equivalent to that given in \cite{barenco1997stabilization}, but we explain it here for completeness.

Observe that $S_2$ can be generated by the element $(1\ 2)$. To generate $S_3$, we need only act on this element from the left by $(1\ 3)$ and $(2\ 3)$. Indeed, the remaining elements of $S_3$ are given by $(1\ 2\ 3)=(1\ 3)(1\ 2)$ and $(1\ 3\ 2)=(2\ 3)(1\ 2)$. This serves as our base case, and we now proceed by induction. Suppose we can generate every element of $S_{k-1}$ in this way. We must show that the remaining elements of $S_k$ are given by acting on $S_{k-1}$ from the left by the transpositions of the form $(i\ k)$ for $i\in \{1,2,\ldots,k-1\}$. To see this, let $(i_1\ i_2\ \cdots\ i_m)$ be an arbitrary $m$-cycle in $S_{k-1}$. Then acting from the left by $(i_j\ k)$ for some $j\in \{1,\ldots,m\}$ yields $(i_j\ k)(i_1\ i_2\ \cdots\ i_m)=(i_1\ i_2\ \cdots\ i_{j-1}\ k\ i_j\ \cdots\ i_m)$. In this way, we can generate every cycle in $S_k$. Since every element of $S_k$ can be decomposed into a product of disjoint cycles, we can now generate every element of $S_k$ recursively by appending only transpositions of the form $(i\ k)$. We can visualize this construction by the circuit given in Figure~\ref{fig:permutation4} for an example when $k =4$. 

Given a way to generate $S_k$, we now need an appropriate control state to implement these elements. By supposition, the identity can always be implemented via the state $\ket{0}$ tensored with itself to some power. What then for the remaining states? Consider only one 'layer' of the recursive construction of $S_k$. It suffices to only ever use one transposition at a time. Thus the control state for every $i$-th layer of transpositions should take the form
\begin{equation}
    \ket{+}_{S_i}=\frac{1}{\sqrt{i+1}}(\ket{0}^{\otimes i} + \ket{10\cdots 0} + \ket{01\cdots 0}+\cdots +\ket{00\cdots 1})\, ,
\end{equation}
as given in \cite{barenco1997stabilization}. These individual control states should be concatenated together to form the control register for the entire algorithm. For a quick sanity check, when considering the tensor product of such states as $i$ ranges from 1 to $k$, the normalization constant out in front becomes $\sqrt{k!}=\sqrt{|S_k|}$.

However, a question remains; can the control register for such a circuit also be generated recursively? Observe, in Figure~\ref{fig:permutation4}, that we denote a series of gates $A_j$ that act on the control register to create superpositions. Furthermore, notice that we have arranged the transpositions in a consistent manner such that each gate is appended in ascending order of transposition. Then we define the gate $A_j$ to act as such:
\begin{equation}
    A_j \ket{0}^{\otimes j-1} =\frac{1}{\sqrt{j}} (\ket{0}^{\otimes j-1} + \sqrt{j-1}\ket{W_{j-1}}) \, ,
\end{equation}
where $\ket{W_{j-1}}=\frac{1}{\sqrt{j-1}}\sum_{i=1}^{j-1}\ket{2^i}$ is the $W$-state on $j-1$ qubits. Here $\ket{2^i}$ is the state with a one in the $i$-th component and a zero elsewhere. We can observe by inspection that this action, when taken recursively from $j=2$ to $j=k$, will generate a superposition over $k!$ basis elements. An example of this construction can be seen in Figure~\ref{fig:permutation4} for $k=4$. 

There are several choices available to construct these $A_j$ gates. We review two here. One recursive approach is to begin by designing the circuit for $A_i$; then the next gate $A_{i+1}$ is given by adding $i+1$ control qubits, initializing the first qubit to a superposition of $(\frac{1}{\sqrt{i}} \ket{0}+\frac{\sqrt{i-1}}{\sqrt{i}}\ket{1})$, then controlling off of this state, implement $A_i$ on the remaining new qubits. However, this na\"ive approach will use numerous gates and quickly grow in size. In \cite{barenco1997stabilization}, they assume the first $i$ qubits are initialized and then add $i+1$ qubits for the recursive step. The $(i+1)$-th qubit can be acted on by a one-qubit gate $U_i$ given by
\begin{align}
    U_i\coloneqq\frac{1}{\sqrt{i+1}}
    \begin{pmatrix} 
    1 & -\sqrt{i}\\
    \sqrt{i} & 1
    \end{pmatrix}\, .
\end{align}
Following this, act simultaneously on the $i+1$ qubit and the remaining qubits with a series of two-qubit gates given by 
\begin{align}
    T_{j,j+1}\coloneqq\frac{1}{\sqrt{i-j+1}}
    \begin{pmatrix} 
    \sqrt{i-j+1} & 0 & 0 & 0\\
    0 & 1 & \sqrt{i-j} & 0\\
    0 & -\sqrt{i-j} & 1 & 0\\
    0 & 0 & 0 & \sqrt{i-j+1}
    \end{pmatrix}\, ,
\end{align}
where $j$ ranges from 1 to $i-1$. This will give the desired control state. In all likelihood, there are even more ways to generate the desired control register. Whichever approach is chosen, the control state should remain the same. Note that the ancilla cost of the control state should be at least $\mathcal{O}(k \log_2 k)$ ancilla qubits regardless simply from the magnitude of the symmetric group, $|S_k|=k!$. 

Given this construction, it is easy to see the number of controlled-SWAP gates needed to perform the symmetric group test. Indeed, from Figure~\ref{fig:permutation4}, we see that the number of controlled-SWAPs needed when $k=4$ is $1+2+3=6$, where the 1 corresponds to the permutation $(1\ 2)$ needed to generate $S_2$, the 2 corresponds to the permutations $(2\ 3)$ and $(1\ 3)$ needed to generate $S_3$ from $S_2$, and the 3 corresponds to the permutations $(3\ 4)$, $(2\ 4)$, and $(1\ 4)$ needed to generate $S_4$ from $S_3$. By induction, the number of controlled-SWAP's needed to perform the $k$-th symmetric group test is $k(k-1)/2$, thus leading to the claimed $\mathcal{O}(k^2)$ gate complexity.

\begin{figure}[ptb]
\begin{center}
\includegraphics[
width=4.0in
]{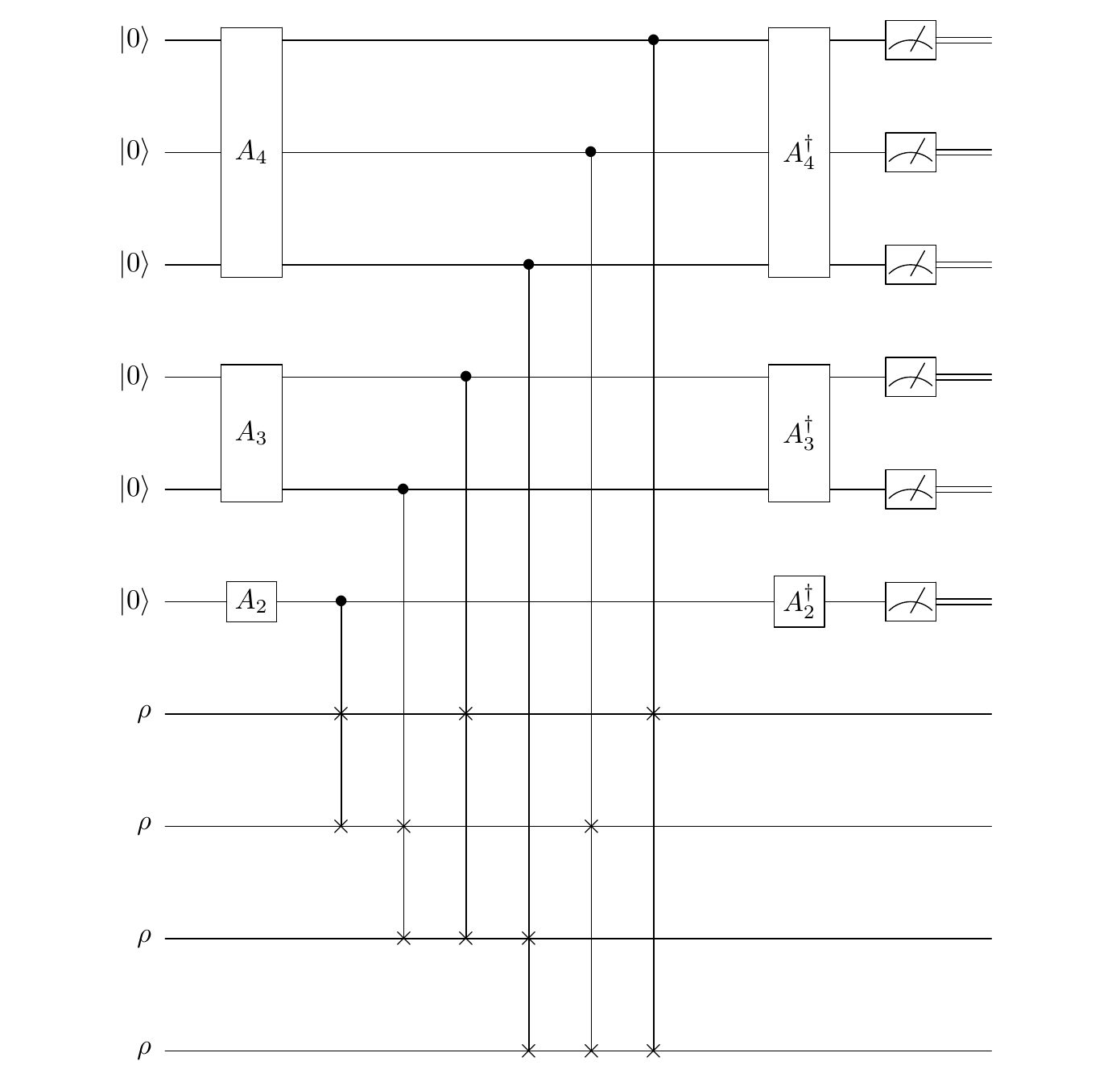}
\end{center}
\caption{Figure demonstrating how to systematically generate a test for the symmetric group of order four. }
\label{fig:permutation4}
\end{figure}

\subsubsection{Dihedral Group}
The dihedral group, $D_k$ is isomorphic to the semi-direct product of $\mathbb{Z}_k$ with $\mathbb{Z}_2$, with $\mathbb{Z}_2$ acting on $\mathbb{Z}_k$ by inversion. As such, it can be formed in a faithful way using a cyclic group generator and a non-commuting action that squares to identity. Using just the generators of the group, it is clear that the unitary flip action adds a factor of two to the number of cyclic gates needed, plus the additional instance of the flip element acting alone. In this manner, the full dihedral group requires at most $2k \log_2(k)$ gates to implement.

\subsection{Comparison between Subgroups of the Symmetric Group}

Now that we have given a method to count the number of quantum gates necessary for these separability tests, we consider if there is any advantage to using a simpler group as $k$ increases. Essentially, when is the trade-off between additional gates and acceptance probability favorable towards the various tests? 

The inherent motivation behind increasing $k$ is to obtain a smaller acceptance probability, prompting the need for Proposition~\ref{th:decreasing}. Clearly, the symmetric test provides the most stringent bound (see Figure~\ref{fig:acceptance}), yet it grows quickly in terms of gate resources needed (see Figure~\ref{fig:resource}). The cyclic group, however, benefits from the simplest construction but does not decay as quickly as the full symmetric group. It should be noted that, while it certainly appears to be true, we have not shown rigorously that the tests corresponding to sequences of groups other than the symmetric groups are decreasing. We suspect that a generating function argument might do the trick, but were not able to produce this result.

\begin{figure}[ptb]
\begin{center}
\includegraphics[
width=4.0in
]{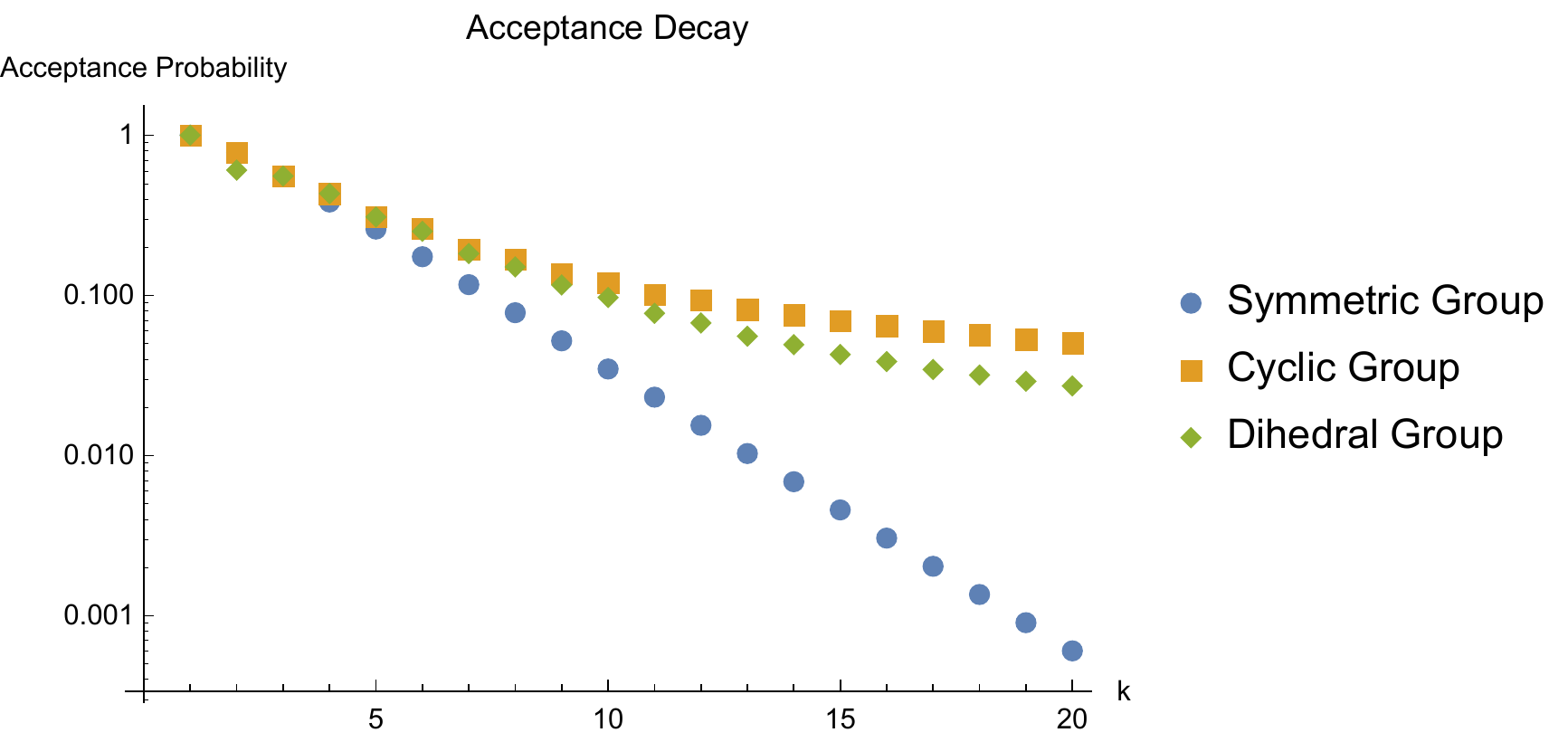}
\end{center}
\caption{Plot of the acceptance probabilities of each separability test as $k$ increases, for the symmetric group $S_k$, the cyclic group $C_k$, and the dihedral group $D_k$. For this example, we use a reduced $W$-state as an example to illustrate the algorithmic scaling for an unextendible state. For a separable state, all acceptance probabilities are equal to one.}
\label{fig:acceptance}%
\end{figure}

\begin{figure}[ptb]
\begin{center}
\includegraphics[
width=4.0in
]{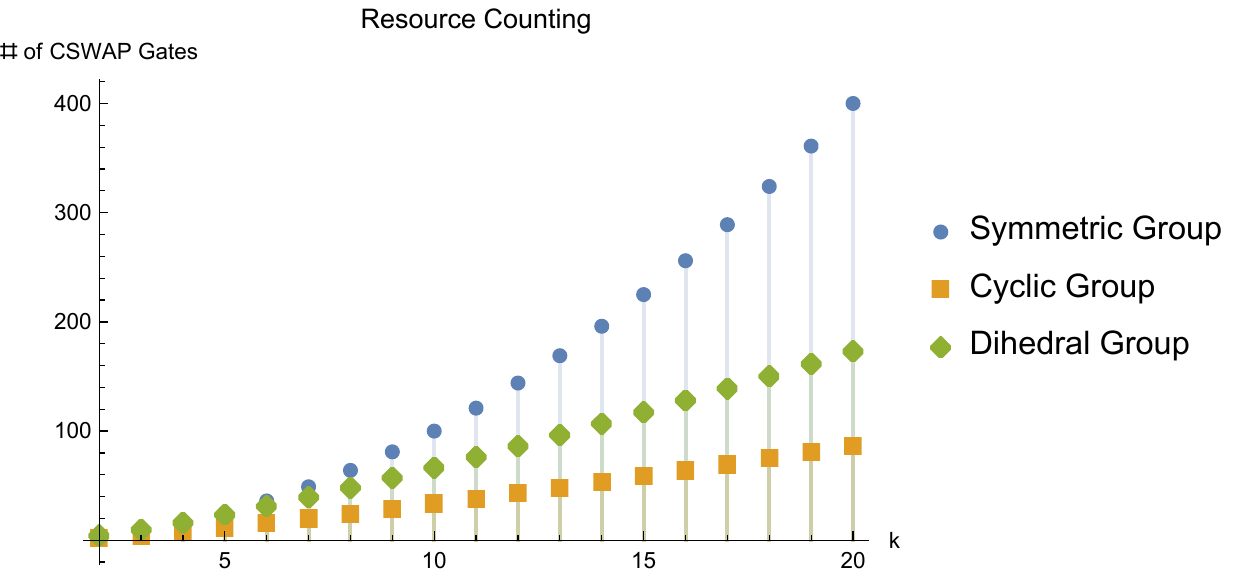}
\end{center}
\caption{Plot of the resource scaling in terms of the number of controlled-SWAP gates used for each group test as $k$ increases. We consider the symmetric group $S_k$, the cyclic group $C_k$, and the dihedral group $D_k$, and we use the gate counting methods described in the text.}
\label{fig:resource}%
\end{figure}

To visualize this trade-off, we consider the quantity $ R_{\textrm{test}}/(1-P_{\textrm{acc}})$, where $R_{\textrm{test}}$ is the number of resources needed to perform the test via the counting methods described above and $P_{\textrm{acc}}$ is the acceptance probability of the test. We employ the quantity $1-P_{\textrm{acc}}$ in the denominator, as we would like the test to have a lower acceptance probability for non-separable states, and thus the denominator will converge to one for better algorithms. This quantifier is very closely aligned with the expected runtime of the algorithm until getting a failure (it would be exactly equal to the expected runtime if we instead used circuit depth over $1-P_{\textrm{acc}}$ as the figure of merit). Thus, in comparing this quantity for the various tests, smaller values correspond to more ideal behavior from the algorithms. In Figure~\ref{fig:ratio}, we show this quantity for algorithms generated by the cyclic group and the symmetric group.

\begin{figure}[ptb]
\begin{center}
\includegraphics[
width=4.0in
]{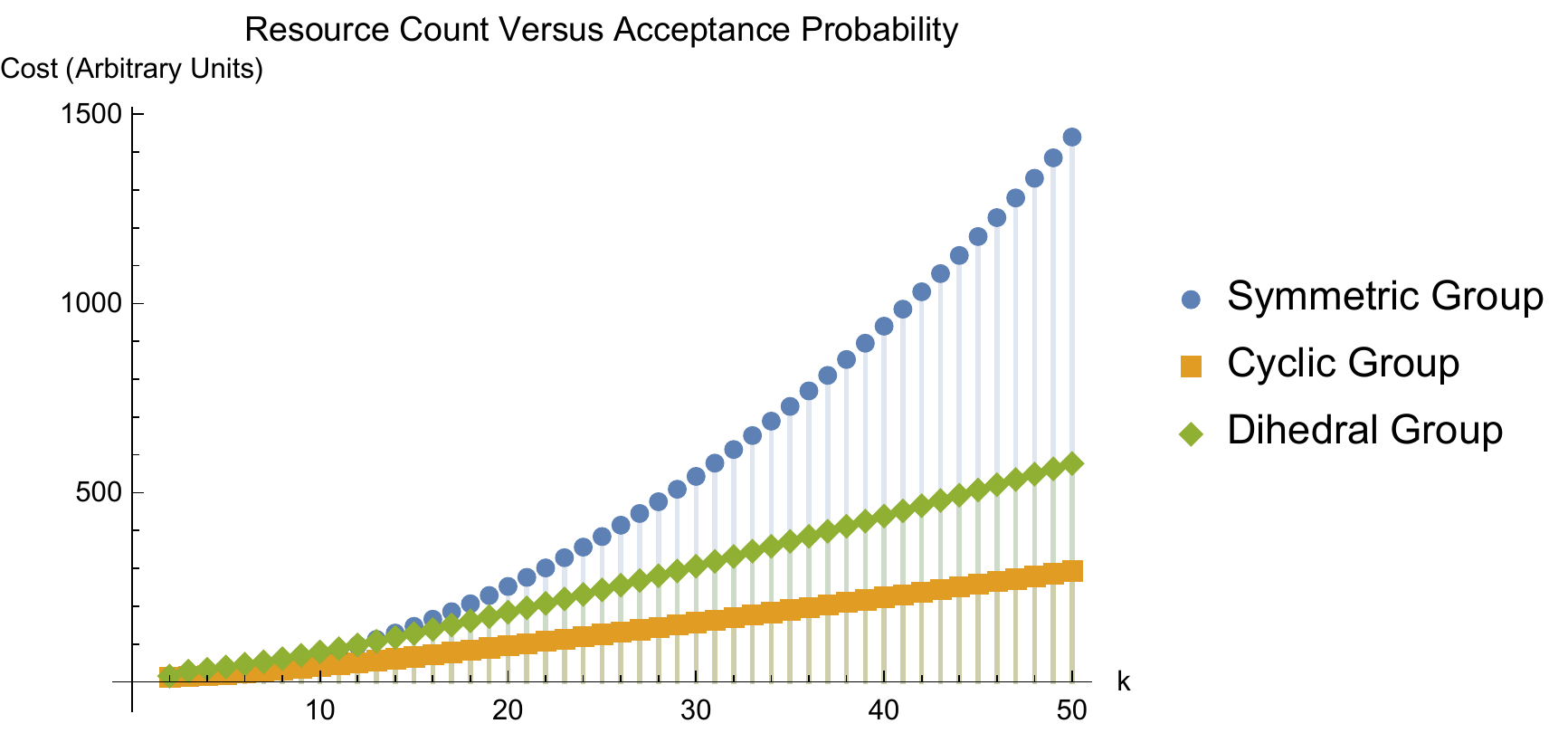}
\end{center}
\caption{We show the ratio of the resources required to rejection probability as $k$ increases. We consider here the cyclic group of $k$ elements as an example of a simple Abelian group and show it gives an advantage in terms of the resources-to-rejection metric over the test generated by the full symmetric group.}
\label{fig:ratio}
\end{figure}

Examining Figure~\ref{fig:ratio}, we see a clear difference in the performance between the tests generated by the cyclic group and the symmetric group. The plotted ratio can be thought of as resources-to-rejection, in the sense that $1-P_{\textrm{acc}}$ is the probability that a non-separable state is correctly identified---or rather, the failure rate of the algorithm for such a state. Although Figure~\ref{fig:acceptance} makes it appear that the standard test generated from $S_k$ would always be preferable, we determine from this comparison that the $C_k$ algorithm gives more benefit per gate resource. 

From this analysis, we can assert that the simpler test for separability is more cost-efficient than the full permutation test. We show in Figure~\ref{fig:acceptance} that both tests show a decrease in acceptance probability as $k$ increases, a desirable trait. However, Figure~\ref{fig:resource} shows how quickly circuit sizes grow as $k$ increases, particularly for $S_k$, which can be considered the standard test. Figure~\ref{fig:ratio} bridges these notions to show that the comparative growth in gate resources of $S_k$ outweighs the relative decrease in acceptance probability given over the $C_k$ test. We thus determine that the cyclic group $C_k$ suffices as a separability test of this nature.

\section{Conclusion}

\label{sec:conclusion}

In this work, we have presented several separability tests for bipartite pure states, and we have established analytical expressions for their acceptance probabilities. These expressions invariably rely on the cycle index polynomial of the group. Indeed, from a mathematical point of view, this relationship seems natural, due to the inherent combinatorics present in the algorithms. Nonetheless, these expressions give us direct insight into the performance of any separability tests generated from a finite group---which we have shown can be feasibly constructed. Using this perspective, we demonstrate that when utilizing more copies of the state under test, these tests become more stringent. Additionally, we observe that the full symmetric test using a representation of the symmetric group gives a quickly decreasing acceptance probability for an entangled state; however, for the given implementations of these algorithms, other tests can use fewer resources and still show great efficiency. 

Here, we have limited ourselves to pure bipartite states; however, we believe multipartite tests may yield results in a similar vein. For instance, a trivial implementation would be to separate all parties into individual tests and then multiply the results. There is a question, however, if more elegant algorithms exist for multipartite cases, and if interesting mathematics arise in the study of such systems. Additionally, generalizing these algorithms to apply to mixed states is a natural next step. In particular, providing tests for separability of multipartite mixed states would be an interesting subject for further study. Furthermore, we think it would be interesting to study the performance of these algorithms on near-term quantum devices; all of our analysis in this paper is for the ideal case.

For future work, we propose considering further the nature of the relationship between cycle index polynomials and quantum algorithms. Our results have shown a clear benefit in the use of cycle index polynomials in describing separability tests; is the reverse true? An interesting question is whether algorithms of this form can be used to determine the coefficients of cycle index polynomials in general, and we leave this for future work.

\section*{Acknowledgements}
MLL acknowledges support from the DOD Smart Scholarship program, and MMW from the National Science Foundation under Grant No.~1907615. ZPB acknowledges Christophe Vignat for his helpful comments on a draft of this work. MLL would like to acknowledge Kate Caillet for helpful discussions.

\section*{Data Availability Statement}
The datasets generated during and/or analysed during the current study are available in the GitHub repository, \sloppy \url{https://github.com/mlabo15/GeneralizedSeparability}.

\section*{Competing Interests Statement}
MLL acknowledges support from the DOD Smart Scholarship program, and MMW from the National Science Foundation under Grant No.~1907615. The authors have no non-financial competing interests to declare.

\nocite{nielsenchuang,wildebook}

\bibliographystyle{unsrt}
\bibliography{Ref}

\end{document}